\newcommand{\onemaxm}{$\mathsf{OneMax}$\xspace}
\newcommand{\ea}{(1+1)~EA\xspace}
\newcommand{\bigo}[1]{\mathbf{\mathcal{O}} \left ( #1 \right)}
\newcommand{\onemax}[1]{\left \lvert #1\right \rvert_1}
\newcommand{\pr }[1]{\mathsf{Pr}\left ( #1 \right )}
\newcommand{\expect}[1]{\mathbb{E}  \left [ #1 \right ]}
\newcommand{\opt}{\textsc{opt}\xspace}
\newcommand{\absl}[1]{\left \lvert #1 \right \rvert}
\newcommand{\gsemo}{\mbox{GSEMO}\xspace}
\newcommand{\ind}{\mathcal{I}}
\newcommand{\jump}{\mathsf{Jump}_{m, n}\xspace}
\newcommand{\plow}{\mathsf{fmut}_{\beta}\xspace}
\newcommand{\mymut}{\mathsf{pmut}_{\beta}\xspace}
\newcommand{\mycmut}{$\mbox{\textsf{cMut}}(p)$\xspace}
\newcommand{\mut}[1]{\mymut}
\newcommand{\umut}{\ensuremath{\mathsf{unif}_p}}
\newcommand{\stdmut}{\ensuremath{\mathsf{unif}_{1}}}
\newcommand{\clmut}{\mathsf{pmut}_{\beta}\xspace}
\newcommand{\pmut}[1]{\ensuremath{\mathsf{pmut}_{#1}}}
\newcommand{\fmut}[1]{\ensuremath{\mathsf{fmut}_{#1}}}
\newcommand{\maxdicut}{\mbox{$\mathsf{Max}$-$\mathsf{Di}$-$\mathsf{Cut}$}\xspace}
\newcommand{\hamming}[1]{\mathcal{H} \left ( #1 \right )}
\newcommand{\inverse}[1]{( #1 )^{-1}}
\newcommand{\optmut}{\textsc{opt}\xspace}
\newcommand{\lm}{S}
\newcommand{\mutualinfo}[1]{\mathsf{MI} \left (#1 \right )}
\newcommand{\notwo}[1]{NO$_2$\xspace}
\DeclareMathOperator*{\argmax}{arg\,max}
\renewenvironment{proof}{\paragraph{Proof}}{\hfill$\square$\vspace{10pt}\\}
\begin{document}

\title{Evolutionary Algorithms and Submodular Functions: Benefits of Heavy-Tailed Mutations}

\titlerunning{Evolutionary Algorithms and Submodular Functions: Benefits of Heavy-Tailed Mutations} 

\author{Tobias Friedrich \inst{1,2} \and Andreas G\"{obel} \inst{1} \and Francesco Quinzan \inst{1} \and Markus Wagner \inst{2}}

\authorrunning{Chauhan et al.}
\tocauthor{Ivar Ekeland, Roger Temam, Jeffrey Dean, David Grove,
Craig Chambers, Kim B. Bruce, and Elisa Bertino}
\institute{Hasso Plattner Institute, Potsdam, Germany
\and University of Adelaide, Adelaide, Australia
}

\maketitle         

\begin{abstract}
A core feature of evolutionary algorithms is their mutation operator. Recently, much attention has been devoted to the study of mutation operators with dynamic and non-uniform mutation rates. Following up on this line of work, we propose a new mutation operator and analyze its performance on the (1+1) Evolutionary Algorithm (EA).

Our analyses show that this mutation operator competes with pre-existing ones, when used by the \ea on classes of problems for which results on the other mutation operators are available. We show that the \ea using our mutation operator finds a $(1/3)$-approximation ratio on any non-negative submodular function in polynomial time. We also consider the problem of maximizing a symmetric submodular function under a single matroid constraint and show that the \ea using our operator finds a  $(1/3)$-approximation within polynomial time. This performance matches that of combinatorial local search algorithms specifically designed to solve these problems and outperforms them with constant probability.

Finally, we evaluate the performance of the \ea using our operator experimentally by considering two applications: (a) the maximum directed cut problem on real-world graphs of different origins, and with up to 6.6 million vertices and 56 million edges and (b) the symmetric mutual information problem using a four month period air pollution data set. In comparison with uniform mutation and a recently proposed dynamic scheme our operator comes out on top on these instances.
\keywords{Evolutionary algorithms, mutation operators, submodular functions, matroids.}
\end{abstract}
%
% -------------------------------------------------- Introduction
%
\section{Introduction}
A key procedure of the \ea that affects its performance is the \emph{mutation operator}, i.e., the operator that determines at each step how the potential new solution is generated. In the past several years there has been a huge effort, both from a theoretical and an experimental point of view, towards understanding how this procedure influences the performance of the \ea and which is the optimal way of choosing this parameter (e.g., see~\cite{Eiben1999parametercontrol,Eiben2003book}).

The most common mutation operator on $n$-bit strings is the static \emph{uniform mutation} operator. This operator, \umut{}, flips each bit of the current solution independently with probability~$p(n)$. This probability, $p(n)$, is called static \emph{mutation rate} and remains the same throughout the run of the algorithm. The most common choice for $p(n)$ is $1/n$; thus, mutated solutions differ in expectation in one bit from their predecessors. Witt~\cite{Witt:2005:WAA:2140048.2140054} shows that this choice of $p(n)$ is optimal for all pseudo-Boolean linear functions. Doerr~et~al.~\cite{doerr2013mutationratematters} further observe that changing $p(n)$ by a constant factor can lead to large variations of the overall run-time of the \ea. They also show the existence of functions for which this choice of $p(n)$ is not optimal.

Static mutation rates are not the only ones studied in literature. Jansen et al. \cite{DBLP:journals/dam/JansenW05} propose a mutation rate which at time step $t$ flips each bit independently with probability $2^{(t - 1) \mod (\lceil \log_2 n\rceil - 1)}/n$. Doerr et al.~\cite{DBLP:conf/gecco/DoerrLMN17} observe that this mutation rate is equivalent to a mutation rate of the form $\alpha /n$, where $\alpha$ is chosen uniformly at random (u.a.r.) from the set $\{2^{(t - 1) \mod (\lceil \log_2 n\rceil - 1)}\mid t\in\{1,\dots,\lceil \log_2 n\rceil\} \}$. 
Doerr et al.~\cite{doerr2018onthefly,doerr2018sensitivity} have proposed a simple on-the-fly mechanism that can approximate optimal mutation rates well for two unimodal functions.

Doerr~et~al.~\cite{DBLP:conf/gecco/DoerrLMN17} notice that the choice of $p(n)=1/n$ is a result of over-tailoring the mutation rates to commonly studied simple unimodal problems. They propose a non-static mutation operator $\plow$, which chooses a mutation rate $\alpha \leq 1/2$ from a power-law distribution at every step of the algorithm. Their analysis shows that for a family of ``jump'' functions introduced below, the run-time of the \ea yields a polynomial speed-up over the optimal time when using $\plow$.

Friedrich~et~al.~\cite{Friedrich2018heavytailedGECCO} propose a new mutation operator. Their operator~\mycmut chooses at each step with constant probability $p$ to flip 1-bit of the solution chosen uniformly at random. With the remaining probability $1-p$, the operator chooses $k\in\{2,\dots,n\}$ uniformly at random and flips~$k$ bits of the solution chosen uniformly at random. This operator performs well in optimizing pseudo-Boolean functions, as well as combinatorial problems such as the minimum vertex cover and the maximum cut. Experiments suggest that this operator outperforms the mutation operator of Doerr et al.~\cite{DBLP:conf/gecco/DoerrLMN17} when run on functions that exhibit large deceptive basins of attraction, i.e., local optima whose Hamming distance from the global optimum is in $\Theta(n)$.

As evolutionary algorithms are used extensively in real world applications, it is important to extend the theoretical analysis of their performance to the more general classes of functions. To improve the performance of \ea in more complex landscapes and inspired by the recent results of Doerr~et~al.~\cite{DBLP:conf/gecco/DoerrLMN17} and Friedrich~et~al.~\cite{Friedrich2018heavytailedGECCO} we propose a new mutation operator $\mymut$. Our operator mutates $n$-bit string solutions as follows. At each step, $\mymut$ chooses $k\in\{1,\dots,n\}$ from a power-law distribution. Then $k$ bits of the current solution are chosen uniformly at random and then flipped. During a run of the \ea using $\mymut$, the majority of mutations consist of flipping a small number of bits, but occasionally a large number, of up to $n$ bit flips can be performed. In comparison to the mutations of $\plow$, the mutations of $\mymut$ have a considerably higher likelihood of performing larger than $(n/2)$-bit jumps.  

\subsubsection*{Run-Time Comparison on Artificial Landscapes} 
Our analysis of the \ea using $\mymut$ starts by considering artificial landscapes. More specifically, in Section~\ref{theoretical:clmut} we show that the \ea using $\mymut$ manages to find the optimum of any function within exponential time. When run on the OneMax function, the \ea with $\mymut$ finds the optimum solution in expected polynomial time. 

In Section~\ref{sec:jump} we consider the problem of maximizing the $n$-dimensional jump function $\jump(x)$, first introduced by Droste et al. \cite{Droste:2002:AEA:568273.568277}. We show that for any value of the parameters $m,n$ with $m$ constant or $n - m$, the expected run time of the \ea using $\mymut$ remains polynomial. This is not the case for the \ea using \umut{}, for which Droste et al. \cite{Droste:2002:AEA:568273.568277} showed a run time of $\Theta (n^{m} + n \log n)$ in expectation. Doerr et al.~\cite{DBLP:conf/gecco/DoerrLMN17} are able to derive polynomial bounds for the expected run-time of the \ea using their mutation operator $\plow$, but in their results limit the jump parameter to $m \leq n/2$. 

\subsubsection*{Optimization of Submodular Functions}
Our main focus in this article is to study the performance of the \ea when optimizing submodular functions. Submodularity is a property that captures the notion of diminishing returns. Thus submodular functions find applicability in a large variety of problems. Examples include: maximum facility location problems~\cite{DBLP:journals/dam/AgeevS99}, maximum cut and maximum directed cut~\cite{Goemans:1995:IAA:227683.227684}, restricted $\mathsf{SAT}$ instances~\cite{DBLP:journals/jacm/Hastad01}. Submodular functions under a single matroid constraint arise in artificial intelligence and are connected to probabilistic fault diagnosis problems \cite{DBLP:conf/aaai/KrauseG07,Lee:2009:NSM:1536414.1536459}.

Submodular functions exhibit additional properties in some cases, such as \emph{symmetry} and \emph{monotonicity}. These properties can be exploited to derive run time bounds for local randomized search heuristics such as the \ea. In particular, Friedrich and Neumann \cite{DBLP:journals/ec/FriedrichN15} give run time bounds for the \ea and \gsemo on this problem, assuming either monotonicity or symmetry.

We show (Section~\ref{sec:submodular-bound}) that the \ea with $\mymut$ on any non-negative, submodular function gives a $1/3$-approximation within polynomial time. This result matches the performance of the local search heuristic of Feige et al.~\cite{DBLP:journals/siamcomp/FeigeMV11} designed to target non-negative, submodular functions in particular. An example of a natural non-negative submodular function that is neither symmetric nor monotone is the utility function of a player in a combinatorial auction (see e.g.~\cite{LLN06}). We further show  (Section~\ref{sec:large_populations_unconstrained}) that the \ea{} outperforms the local search of Feige et al.~\cite{DBLP:journals/siamcomp/FeigeMV11} at least with constant probability (w.c.p.).

Additionally we evaluate the performance of the \ea on the maximum directed cut problem using $\mymut$ experimentally, on real-world graphs of different origins, and with up to 6.6 million vertices and 56 million edges. Our experiments show that $\mymut$ outperforms $\umut$ and the uniform mutation operator on these instances. This analysis appears in Section~\ref{sec:max_di_cut}

In section~\ref{sec:submodular_const} we consider the problem of maximizing a symmetric submodular function under a single matroid constraint. Our analysis shows that the \ea{} using $\mymut$ finds a $1/3$-approximation within polynomial time. Our analysis can be easily extended to show that the same results apply to the \ea when using the uniform mutation operator or $\umut$. 

To establish our results empirically, in Section~\ref{sec:symmetric_mutual_information} we consider the symmetric mutual information problem under a cardinality constraint. We consider an air pollution data set during a four month interval and use the \ea to identify the highly informative random variables of this data set. We observe that $\mymut$ performs better than the uniform mutation operator and $\umut$ for a small time budged and a small cardinality constraint, but for a large cardinality constraint all mutation operators have similar performance.

A comparison of the previously known performance of deterministic local search algorithms on submodular functions and our results on the \ea can be found in Table \ref{fig:table_bounds}.
\begin{table*}[t]
\includegraphics[width=\linewidth]{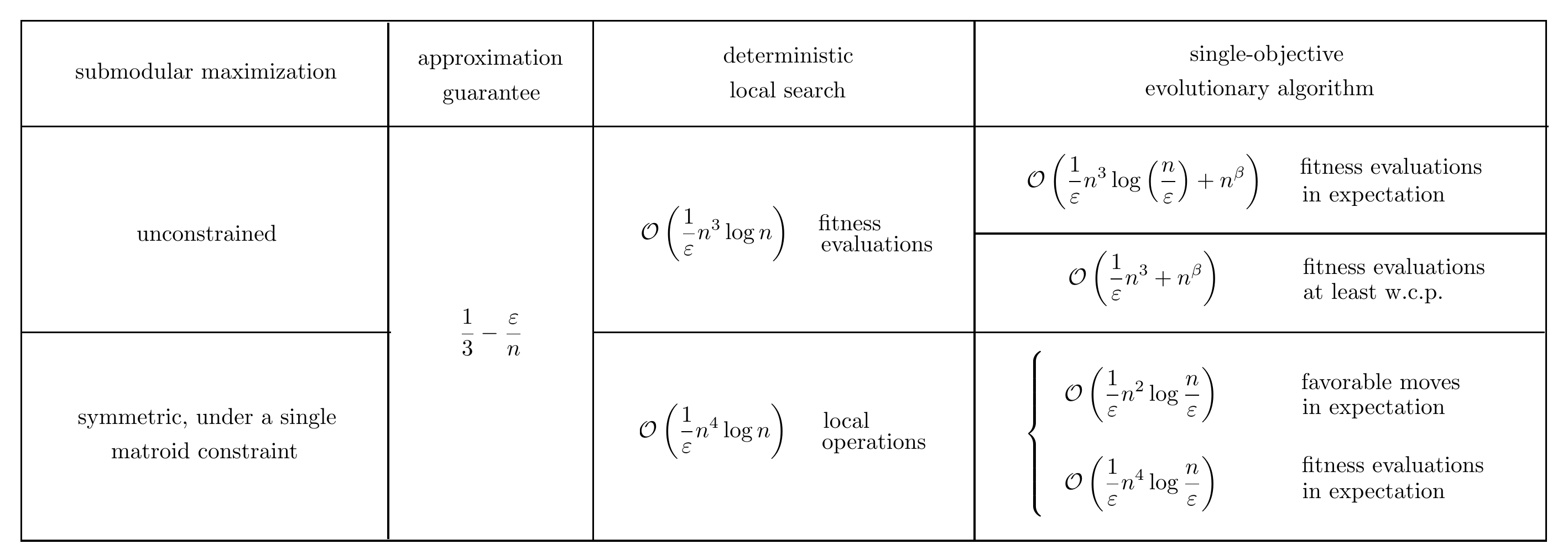}
\caption{Upper bounds on the run time for the \ea with mutation $\clmut$, with parameter $\beta >1$. The expected run time in the unconstrained case is given in Section \ref{sec:submodular-bound}, whereas the improved upper-bound in Section \ref{sec:large_populations_unconstrained}. The expected run time bounds for the \ea in the constrained case are discussed in Section \ref{sec:submodular_const}. Previous run time bounds for deterministic local search algorithms  are discussed in Feige et al.~\cite{DBLP:journals/siamcomp/FeigeMV11} and Lee et al. \cite{Lee:2009:NSM:1536414.1536459}. We remark that \emph{local operations} for the deterministic local search correspond to \emph{favorable moves} in the analysis of the \ea. Hence, they are the same unit of measurement.}
\label{fig:table_bounds}
\end{table*}
%
% -------------------------------------------------- Preliminaries
%
\section{Preliminaries}
\label{algorithms_setting}
\subsection{The \ea and Mutation Rates.}
\begin{algorithm}[t]
	\caption{The \ea}
    \label{alg:ea}
    \textbf{input:} a fitness function $f:2^V \rightarrow \mathbb{R}_{\geq 0}$\;
    \textbf{output:} an (approximate) global maximum\\$\qquad \quad \ \ $of the function $f$\;
    $\qquad $\\
    // sample initial solution\\
 	choose $x\in \{0, 1\}^n$ uniformly at random\;
	$\qquad $\\
 	\While{convergence criterion not met}{
        $\qquad $\\
    	// apply mutation operator\\
   		$y \gets \mathbf{Mutation}(x)$\;
        $\qquad $\\
    	// perform selection\\
		\If{$f(y) \geq f(x)$}{
			$x \gets y$\;
		}
        $\qquad $\\
   	}
    \textbf{return} x\;
\end{algorithm}

We study the run time of the simple $(1+1)$ Evolutionary Algorithm under various configurations. This algorithm requires a bit-string of fixed length~$n$ as input. An offspring is then generated by the \emph{mutation operator}, an operator that resembles asexual reproduction. The fitness of the solution is then computed and the less desirable result is discarded. This algorithm is \emph{elitist} in the sense that the solution quality never decreases throughout the process. Pseudo-code for the \ea is given in Algorithm~\ref{alg:ea}.

In the \ea the offspring generated in each iteration depends on the mutation operator. The standard choice for the $\mathsf{Mutation}(\cdot)$ is to flip each bit of an input string $x = (x_1, \dots, x_n)$ independently with probability $1/n$. In a slightly more general setting, the mutation operator $\umut(\cdot)$ flips each bit of $x$ independently with probability $p/n$, where $p \in [0, n/2]$. We refer to the parameter $p$ as \emph{mutation rate}. 

Uniform mutations can be further generalized, by sampling the mutation rate $p \in [0, n/2]$ at each step according to a given probability distribution. We assume this distribution to be fixed throughout the optimization process. Among this class of mutation rates, is the \emph{power-law} mutation $\plow$ of Doerr et al.~\cite{DBLP:conf/gecco/DoerrLMN17}. $\plow$ chooses the mutation rate according to a power-law distribution on $[0, 1/2]$ with exponent $\beta$. More formally, denote with $X$ the r.v. (random variable) that returns the mutation rate at a given step. The power-law operator $\plow$ uses a probability distribution $D_{n/2}^{\beta}$ s.t. $\pr{X = k} = H_{n/2}^{\beta}k^{-\beta}$, where 
$H_{\ell}^{\beta} = \sum_{j = 1}^{\ell} \frac{1}{j^{\beta}}.$
The $H_{\ell}^{\beta}$s are known in the literature as generalized harmonic numbers. Interestingly, generalized harmonic numbers can be approximated with the Riemann Zeta function as $\zeta(\beta) = \lim_{\ell \to + \infty} H_{\ell}^{\beta}$. In particular, harmonic numbers $H_{n/2}^{\beta}$ are always upper-bounded by a constant, for increasing problem size and for a fixed $\beta > 1$.

\subsection{Non-uniform Mutation Rates.}
In this paper we consider an alternative approach to the non-uniform mutation operators described above. For a given probability distribution
$P = [1, \dots, n] \rightarrow \mathbb{R}$
the proposed mutation operator samples an element $k \in [1, \dots, n]$ according to the distribution $P$, and flips \emph{exactly} $k$-many bits in an input string $x = (x_1, \dots  x_n)$, chosen uniformly at random among all possibilities. This framework depends on the distribution $P$, which we always assume fixed throughout the optimization process.

Based on the results of Doerr et al. \cite{DBLP:conf/gecco/DoerrLMN17}, we study a specialization of our non-uniform framework that uses a distribution of the form $P = D_{n}^{\beta}$. We refer to this operator as $\clmut$, and pseudocode is given in Algorithm~\ref{alg:myMutation}.
\begin{algorithm}[t]
	\caption{The mutation operator $\clmut (x)$}
	\label{alg:myMutation}
    \textbf{input:} a pseudo-Boolean array $x$\;
    \textbf{output:} a mutated pseudo-Boolean array $y$\;
    $\quad $\\
    $y \gets x$\;
    choose $k \in [1, \dots, n]$ with distribution $D_{n}^{\beta}$\;
	flip $k$-bits of $y$ chosen uniformly at random\;
    $\quad $\\
	\textbf{return} $y$\;
\end{algorithm}
This operator uses a power-law distribution on the probability of performing exactly $k$-bit flips in one iteration. That is, for $x\in \{0, 1\}^n$ and all $k\in\{1,\dots,n\}$, 
\begin{equation}
\label{eq:hamming_distance}
\pr{\hamming{x, \clmut(x)} = k} = \inverse{H_n^{\beta}} k^{-\beta} 
\end{equation}
We remark that with this operator, for any two points $x, y \in \{0, 1\}^n$, the probability $\pr{y = \clmut (x)}$ only depends on their Hamming distance $\hamming{x, y}$.

Although both operators, $\plow$ and $\clmut$, are defined in terms of a power-law distribution their behavior differs. We note that, for any choice of the constant $\beta > 1$ and all $x \in \{0, 1\}^n$, $\pr{\hamming{x, \plow (x)} = 0} > 0$, while $\pr{\hamming{x, \clmut (x)} = 0} = 0$. We discuss the advantages and disadvantages of these two operators in Sections~\ref{theoretical:al}.

\subsection{Submodular functions and matroids.}

Submodular set function functions intuitively capture the notion of diminishing returns -- i.e. the more you acquire the less your marginal gain. More formally, the following definition holds.
\begin{definition}
\label{def:subm_func}
A set function $f:2^V \rightarrow \mathbb{R}_{\geq 0}$ is submodular if it holds $f(S) +f(T) \geq f(S\cup T) + f(S\cap T)$ for all $S, T \subseteq V$.
\end{definition}

We remark that in this context $V$ is always a finite set. It is well-known that the defining axiom in Definition \ref{def:subm_func} is equivalent to the requirement
\begin{equation}
f(S\cup \{x\}) - f(S) \geq f(T\cup \{x\}) - f(T),\label{def:submod_1} 
\end{equation}
for all $S, T \subseteq V$ such that $S\subseteq T$ and $x \in T\setminus S$ (see eg. Welsh \cite{welsh2010matroid}). 

We say that a set function $f:2^V \rightarrow \mathbb{R}_{\geq 0}$ is \emph{symmetric} if it holds $f(S) = f(V\setminus S)$ for all $S\subseteq V$.

In some cases, feasible solutions are characterized as the independent sets of a matroid with base set $V$, as in the following definition.
\begin{definition}
\label{def:single_matr}
Given a set $V$, a matroid\\$\mathcal{M} =(V, \ind )$ with base set $V$ consists of a collection of subsets $\ind$ of $V$ with the following properties:
\begin{enumerate}[$\bullet$]
\item $\emptyset \in \ind$;
\item if $T \in \ind$, then $S\in \ind$ for all subsets $S\subseteq T$;
\item if $S, T \in \ind$ and $\absl{S} \leq \absl{T}$, then there exists a point $x \in T\setminus S$ s.t. $S\cup \{x\} \in \ind$;
\end{enumerate}
\end{definition}

From the axioms in Definition \ref{def:single_matr}, it follows that two maximal independent sets always have the same number of elements. This number is called the \emph{rank} of a matroid. It is possible to generalize this notion, as in the following definition.
\begin{definition}
\label{def:rank_function}
Consider a matroid $\mathcal{M} =(V, \ind )$. For any subset $S \subseteq V$, the rank function $r(C)$ returns the size of the largest independent set in $S$ - i.e. $r(S) = \argmax_{T \subseteq S} \{ \absl{T} \colon T \in \ind \} $.
\end{definition}

\subsection{Markov's Inequality.}
We introduce a basic probabilistic inequality that is useful in the run time analysis in Section \ref{sec:large_populations_unconstrained}. This simple tool is commonly referred to as Markov's Inequality. We use the following variation of it. 
\begin{lemma}[Markov]
\label{lemma:BD}
Let $X$ be a random variable, where $X \in [0, 1]$. Then it holds
\[
\pr{X \leq c} \leq \frac{1 - \expect{X}}{1 - c},
\]
for all $0 \leq c \leq \expect{X}$.
\end{lemma}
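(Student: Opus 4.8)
The plan is to reduce the statement to the classical Markov inequality by passing to the complementary random variable. First I would set $Y = 1 - X$. Since $X \in [0,1]$ by hypothesis, the variable $Y$ is non-negative (indeed $Y \in [0,1]$), so the standard form of Markov's inequality, $\pr{Y \geq a} \leq \expect{Y}/a$, is available for every threshold $a > 0$. The point of this substitution is that the event we wish to control, $\{X \leq c\}$, becomes an \emph{upper-tail} event for $Y$, which is precisely what the classical inequality bounds.

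Next I would rewrite the target event. For any real $c$ we have the equivalence of events $\{X \leq c\} = \{1 - X \geq 1 - c\} = \{Y \geq 1 - c\}$. Assuming $c < 1$ so that $1 - c > 0$, I apply Markov's inequality to $Y$ with threshold $a = 1 - c$ to obtain
\[
\pr{X \leq c} = \pr{Y \geq 1 - c} \leq \frac{\expect{Y}}{1 - c}.
\]
It then remains only to evaluate the numerator: by linearity of expectation, $\expect{Y} = \expect{1 - X} = 1 - \expect{X}$. Substituting gives exactly the claimed bound $\pr{X \leq c} \leq \frac{1 - \expect{X}}{1 - c}$.

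There is essentially no deep obstacle here; the whole argument is a one-line reduction once the complementary variable is introduced. The only points requiring a moment of care are bookkeeping rather than substance. The threshold $1 - c$ must be strictly positive for Markov to apply, so I would note that the interesting regime is $c < 1$ (and handle $c = 1$ trivially, since $\pr{X \leq 1} = 1$ while the right-hand side is either infinite or, in the degenerate case $\expect{X} = 1$, can be treated separately). I would also remark that the hypothesis $c \leq \expect{X}$ is what makes the bound meaningful as a probability estimate: it is equivalent to $1 - \expect{X} \leq 1 - c$, which forces the right-hand side to lie in $[0,1]$, so the stated restriction on $c$ guarantees a non-vacuous conclusion rather than being needed for the derivation itself.
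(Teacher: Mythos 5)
Your proof is correct and is exactly the standard argument: the paper itself gives no proof of Lemma~\ref{lemma:BD}, merely citing Mitzenmacher and Upfal, and the reduction you use --- passing to $Y = 1 - X$, noting $\{X \leq c\} = \{Y \geq 1 - c\}$, and applying classical Markov with threshold $1 - c$ --- is precisely the canonical derivation behind that citation. Your side remarks (the regime $c < 1$, and the observation that $c \leq \expect{X}$ only serves to make the bound non-vacuous rather than being needed for the derivation) are accurate bookkeeping and require no changes.
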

For a discussion of Lemma \ref{lemma:BD}, see eg. Mitzenmacher and Upfal \cite[Theorem~3.1]{Mitzenmacher:2005:PCR:1076315}.
%
% ------------------------------ Submodular Function
%
\subsection{The Multiplicative Drift Theorem.}
The Multiplicative Drift theorem is a powerful tool to analyze the expected run time of randomized algorithms such as the \ea. Intuitively, for a fitness function $f:2^V \rightarrow \mathbb{R}_{\geq 0}$ we view the run time of the \ea as a Markov chain $\{X_t\}_{t \geq 0}$, where $X_t$ depends on the $f$-value reached at time-step $t$. The Multiplicative Drift theorem gives an upper-bound on the expected value of the first hitting time $T =  \inf \{t \colon X_t = 0\}$, provided that the change of the average value of the process $\{X_t\}_{t \geq 0}$ is within a multiplicative factor of the previous solution. The following theorem holds.
\begin{theorem}[Theorem 3 in Doerr et al. \cite{DBLP:journals/algorithmica/DoerrJW12}]
\label{thm:multiplicative_drift}
Let $\{X_t\}_{t\geq 0}$ be a random variable describing a Markov process over a finite state space $\mathcal{S} \subseteq \mathbb{R}$. Let $T$ be the random variable that denotes the earliest point in time $t \in \mathbb{N}_0$ such that~$X_t = 0$. Suppose that there exist $\delta > 0$, $c_{\min} > 0$, and $c_{\max} > 0$ such that
\begin{enumerate}[$\bullet$]
\item $\mathbb{E}[X_t - X_{t + 1} \mid X_t] \geq \delta X_t$;
\item $X_t \in [c_{\min}, c_{\max}]\cup \{0\}$;
\end{enumerate}
for all $t < T$. Then it holds $\mathbb{E}[T] \leq \frac{2}{\delta} \ln \left ( 1 + \frac{c_{\max}}{c_{\min}} \right )$.
\end{theorem}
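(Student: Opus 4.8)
The plan is to combine the one‑step drift hypothesis, which controls how fast $\expect{X_t}$ decays, with Markov's inequality (Lemma~\ref{lemma:BD} is the wrong direction, but ordinary Markov suffices here) and the identity $\expect{T} = \sum_{t \geq 0} \pr{T > t}$.

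\textbf{Step 1: iterating the drift condition.} First I would rewrite the hypothesis $\expect{X_t - X_{t+1} \mid X_t} \geq \delta X_t$ as $\expect{X_{t+1} \mid X_t} \leq (1-\delta) X_t$. Treating $0$ as an absorbing state, this inequality also holds trivially when $X_t = 0$, so it holds for every $t$ regardless of whether $t < T$. Taking expectations and applying the tower rule repeatedly then gives $\expect{X_t} \leq (1-\delta)^t \expect{X_0} \leq (1-\delta)^t c_{\max}$, the last step using $X_0 \leq c_{\max}$.

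\textbf{Step 2: from expectation to tail probability.} The essential use of the state-space hypothesis $X_t \in [c_{\min}, c_{\max}] \cup \{0\}$ is that every positive value of the process is bounded away from $0$ by $c_{\min}$. Hence $\pr{T > t} = \pr{X_t \neq 0} = \pr{X_t \geq c_{\min}}$, and by Markov's inequality this is at most $\expect{X_t}/c_{\min}$. Combined with Step~1 this yields $\pr{T > t} \leq \min\{1, (1-\delta)^t c_{\max}/c_{\min}\}$, where the trivial bound $1$ is retained for small $t$. I regard this step as the conceptual heart of the argument: without the gap between $0$ and $c_{\min}$, Markov's inequality would give no useful control on the event $\{X_t \neq 0\}$.

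\textbf{Step 3: summation and simplification.} Using $\expect{T} = \sum_{t \geq 0} \pr{T > t}$, I would split the sum at the crossover index $t_0$ where $(1-\delta)^t c_{\max}/c_{\min}$ first drops below $1$, bounding the initial $t_0$ terms by $1$ and the geometric tail by $\sum_{t \geq t_0}(1-\delta)^t c_{\max}/c_{\min} \leq 1/\delta$. Since $t_0 \approx \ln(c_{\max}/c_{\min})/(-\ln(1-\delta))$ and $-\ln(1-\delta) \geq \delta$, this produces a bound of the form $\delta^{-1}(1 + \ln(c_{\max}/c_{\min}))$; the stated clean form then follows from elementary estimates such as $1 + \ln R \leq 2\ln(1+R)$ for $R \geq 1$. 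The main obstacle I anticipate is precisely the bookkeeping across Steps~1 and~3: making the iteration of the one-step drift rigorous despite the hypothesis being assumed only for $t < T$ (resolved by the absorbing-state convention), and then choosing the split point and constants carefully enough to land on exactly the factor-$2$, $\ln(1 + c_{\max}/c_{\min})$ expression rather than the slightly tighter $1 + \ln(c_{\max}/c_{\min})$ form.
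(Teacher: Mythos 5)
The paper never proves this statement: it is imported as a black box (Theorem~3 of Doerr, Johannsen and Winzen~\cite{DBLP:journals/algorithmica/DoerrJW12}) and only applied, in Phase~1 of the proof of Theorem~\ref{thm:general_bound_constrained}, so there is no in-paper proof to compare yours against; your route (iterate the drift, apply plain Markov at the gap $c_{\min}$, sum the tail probabilities) is the standard argument for bounds of this shape. On its merits your sketch is correct, and two of your side remarks are exactly right: Lemma~\ref{lemma:BD} is indeed the wrong direction (it controls lower tails), and the interval hypothesis enters solely through $\pr{T>t}\le\pr{X_t\ge c_{\min}}\le \expect{X_t}/c_{\min}$, which would give nothing without the gap between $0$ and $c_{\min}$. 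Step~1 is clean once you pass to the stopped process $X_{\min(t,T)}$, for which $\expect{X_{t+1}\mid X_t}\le(1-\delta)X_t$ holds unconditionally and the tower rule yields $\expect{X_t}\le(1-\delta)^t c_{\max}$.

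The one point needing genuine care is the bookkeeping you yourself flagged in Step~3. Writing $R=c_{\max}/c_{\min}$, the naive split gives $\expect{T}\le t_0+\delta^{-1}$ with $t_0\le\delta^{-1}\ln R+1$, i.e.\ $\delta^{-1}(1+\ln R)+1$, and the stray $+1$ is fatal in the corner of small $R$ and moderately large $\delta$: the required inequality $1+\delta+\ln R\le 2\ln(1+R)$ fails, for instance, at $R=1$, $\delta=1/2$. The repair is to pair the ceiling's overshoot with the correspondingly smaller tail: let $s$ solve $(1-\delta)^s R=1$ and write $t_0=\lceil s\rceil=s+\theta$ with $\theta\in[0,1)$; the geometric tail is then $(1-\delta)^{t_0}R/\delta=(1-\delta)^{\theta}/\delta$, and since $(1-\delta)^{\theta}\le 1-\theta\delta$ (convexity of $\theta\mapsto(1-\delta)^{\theta}$, compared with its chord between $\theta=0$ and $\theta=1$), the overshoot and the shrunken tail together contribute at most $1/\delta$, giving $\expect{T}\le s+1/\delta\le\delta^{-1}(1+\ln R)$ via $-\ln(1-\delta)\ge\delta$. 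Your closing estimate $1+\ln R\le 2\ln(1+R)$ for $R\ge1$ is valid (the difference equals $2\ln 2-1>0$ at $R=1$ and is nondecreasing in $R$), and it finishes the proof with exactly the stated constant. Since you anticipated precisely this obstacle, I would count the proposal as correct in approach with one loose constant-chasing step to tighten, not as a gap in the idea.
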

%
% ------------------------------ Artificial Landscapes
%
\section{Artificial Landscapes}
\label{theoretical:al}
\subsection{General upper bounds for the \ea.}
\label{theoretical:clmut}
In this section we bound from above the run time of the \ea using the mutation operator $\clmut$ on any fitness function $f\colon \{0, 1\}^n \rightarrow \mathbb{R}$. It is well-known that the \ea using uniform mutation on any such fitness function has expected run time at most $n^n$. This upper-bound is tight, in the sense that there exists a function $f$ s.t. the expected run time of the \ea using uniform mutation to find the global optimum of $f$ is $\Omega (n^n)$. For a discussion on these bounds see Droste et al. \cite{Droste:2002:AEA:568273.568277}. Doerr et al. \cite{DBLP:conf/gecco/DoerrLMN17} prove that on any fitness function $f\colon \{0, 1\}^n \rightarrow \mathbb{R}$ the \ea using the mutation operator $\plow$ has run time at most $\bigo{H_{n/2}^{\beta}2^n n^{\beta}}$. Similarly, we derive a general upper bound on the run time of the \ea using mutation $\clmut$.
\begin{lemma}
\label{lemma:any-fitness}
On any fitness function $f\colon\{0, 1\}^n \rightarrow \mathbb{R}$ the \ea with mutation $\clmut$ finds the optimum solution after expected $\bigo{H_n^{\beta}e^{n/e}n^{\beta}}$ fitness evaluations, with the constant implicit in the asymptotic notation independent of $\beta$.
\end{lemma}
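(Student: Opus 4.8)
The plan is to bound the expected runtime by finding a uniform lower bound on the probability that, from any current solution $x$, the operator $\clmut$ produces the global optimum $x^*$ in a single mutation step. Since the \ea is elitist and the optimum is absorbing, once we have such a lower bound $p_{\min}$ on the per-step success probability, the number of fitness evaluations until the optimum is found is stochastically dominated by a geometric random variable with success probability $p_{\min}$, giving expected runtime at most $1/p_{\min}$.

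First I would fix an arbitrary current solution $x \in \{0,1\}^n$ and the optimum $x^*$, and let $k = \hamming{x, x^*}$ be their Hamming distance. To jump directly from $x$ to $x^*$, the operator must (a) sample the value $k$ from the power-law distribution $D_n^\beta$, which by Equation~\eqref{eq:hamming_distance} happens with probability $\inverse{H_n^\beta} k^{-\beta}$, and (b) having chosen to flip exactly $k$ bits uniformly at random, select precisely the $k$ bits in which $x$ and $x^*$ differ, which happens with probability $1/\binom{n}{k}$. Hence the one-step success probability from $x$ is exactly $\inverse{H_n^\beta} k^{-\beta} \binom{n}{k}^{-1}$, and I would lower-bound this over all $k \in \{1, \dots, n\}$.

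The main obstacle, and the crux of the calculation, is to bound $\binom{n}{k}$ and the factor $k^{-\beta}$ together so as to produce the claimed $e^{n/e}$ term while keeping the constant independent of $\beta$. The worst case for $\binom{n}{k}$ is $k \approx n/2$, where it is exponentially large in $n$; the factor $k^{-\beta}$ is worst (smallest) near $k = n$. The key estimate is the elementary bound $\binom{n}{k} \leq e^{n/e}$ uniformly in $k$, which follows from maximizing $\binom{n}{k}\le (en/k)^k$ (equivalently $k \ln(n/k)$) over $k$; the maximum of $x\ln(n/x)$ is attained at $x = n/e$ and equals $n/e$, yielding $\binom{n}{k} \le e^{n/e}$ for every $k$. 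Combined with the crude bound $k^{-\beta} \geq n^{-\beta}$ for all $k \le n$, this gives
\begin{equation}
\inverse{H_n^\beta}\, k^{-\beta} \binom{n}{k}^{-1} \;\geq\; \inverse{H_n^\beta}\, n^{-\beta}\, e^{-n/e}
\end{equation}
for every $k$, hence a uniform lower bound $p_{\min} = \inverse{H_n^\beta} n^{-\beta} e^{-n/e}$ on the per-step success probability.

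Finally I would conclude by the geometric-domination argument: the expected number of steps is at most $1/p_{\min} = H_n^\beta\, e^{n/e}\, n^\beta$, which is $\bigo{H_n^\beta e^{n/e} n^\beta}$ as claimed. The constant hidden in the asymptotic notation is $1$ and in particular independent of $\beta$, since $\beta$ enters only through the explicit factors $H_n^\beta$ and $n^\beta$ that already appear in the stated bound and nowhere in the bound on $\binom{n}{k}$. I expect the only delicate point to be the verification that the $\binom{n}{k} \le e^{n/e}$ estimate is genuinely uniform in $k$ and tight enough to match the theorem statement; everything else is routine.
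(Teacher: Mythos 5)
Your overall architecture --- a uniform lower bound on the one-step probability of jumping from the current solution directly to the optimum, followed by a geometric waiting-time argument --- is exactly the paper's argument, and your decomposition $\pr{y = \clmut(x)} = \binom{n}{k}^{-1}\inverse{H_n^{\beta}}k^{-\beta}$ with the crude bound $k^{-\beta} \geq n^{-\beta}$ matches the paper line for line. The genuine gap is your key estimate $\binom{n}{k} \leq e^{n/e}$, which is false. The derivation slips when you pass from $(en/k)^k$ to ``maximizing $k\ln(n/k)$'': in fact $(en/k)^k = \exp\bigl(k + k\ln(n/k)\bigr)$, and the additive $k$ in the exponent matters --- the maximum of $x + x\ln(n/x)$ on $[1,n]$ is $n$, attained at $x = n$, so this route only yields the useless $\binom{n}{k} \leq e^{n}$. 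What you actually bounded is $(n/k)^k \leq e^{n/e}$ (whose maximizer is indeed $x = n/e$), but $\binom{n}{k} \leq (n/k)^k$ fails badly for mid-range $k$: at $k = n/2$ one has $\binom{n}{n/2} = \Theta\bigl(2^n/\sqrt{n}\bigr)$, and since $\ln 2 \approx 0.693 > 1/e \approx 0.368$, this exceeds $e^{n/e}$ by an exponential factor. Consequently your claimed uniform bound $p_{\min} = \inverse{H_n^{\beta}}n^{-\beta}e^{-n/e}$ does not hold; the honest worst case of this single-jump argument, at $k \approx n/2$, gives only a success probability of order $\inverse{H_n^{\beta}}n^{-\beta}2^{-n}$ (up to polynomial factors), i.e., an upper bound of order $H_n^{\beta}\,2^n\,n^{\beta}$, analogous to the $\bigo{H_{n/2}^{\beta}2^n n^{\beta}}$ bound Doerr et al.\ prove for $\plow$.

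In fairness, the paper's own proof contains the same defect at its final link: it correctly derives $\binom{n}{k}^{-1} \geq \binom{n}{n/2}^{-1} \geq (2e)^{-n/2}$ via $\binom{n}{k} \leq (en/k)^k$, but then asserts $(2e)^{-n/2} \geq e^{-n/e}$, which is false, since $(1+\ln 2)/2 \approx 0.847 > 1/e$ means the inequality runs the other way. So neither your write-up nor the paper actually establishes the stated $e^{n/e}$ factor by this technique: a direct-jump argument cannot give an exponential base below $2$, because the algorithm may sit at Hamming distance $n/2$ from the optimum on an adversarial fitness function. You correctly flagged the uniformity of the binomial estimate as the delicate point --- it is precisely where the argument breaks --- and repairing the lemma would require either weakening the exponential factor in the statement to $2^n$ (or $(2e)^{n/2}$, as the paper's intermediate step supports) or a genuinely different argument that does not rely on jumping to the optimum in one mutation.
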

\begin{proof}
Without loss of generality we assume $n$ to be even. We proceed by identifying a general lower bound on the probability of reaching any point from any other point. To this end, let $x, y \in \{0, 1\}^n$ be any two points and let $k = \hamming{x, y}$ be their Hamming distance. Then the probability of reaching the point $y$ in one iteration from $x$ is
\begin{equation*}
\pr{y  = \clmut (x)} = \binom{n}{k}^{-1} \pr{\hamming{x, \clmut (x)} = k}.
\end{equation*}
From \eqref{eq:hamming_distance} we have that it holds $\pr{\hamming{x, \clmut (x)} = k}= \inverse{H_n^{\beta}} k^{-\beta} \geq \inverse{H_n^{\beta}} n^{-\beta}$ for all choices of $x \in \{0, 1\}^n$ and $k = 1, \dots, n$. Using a known lower bound of the binomial coefficient we have that 
\[
\binom{n}{k}^{-1} \geq \binom{n}{n/2}^{-1} \geq (2 e)^{-n/2} \geq e^{-n/e}, 
\]
from which it follows that $\pr{y = \clmut (x)} \geq \inverse{H_n^{\beta}}  e^{-n/e} n^{-\beta}$, for any choice of $x$ and $y$. We can roughly estimate run time as a geometric distribution with probability of success $\pr{y = \clmut (x)}$. Hence, we conclude by taking the inverse of the estimate above, which yields an upper-bound on the probability of convergence on any fitness function.
\end{proof}

We consider the \onemaxm function, defined as 
$\mbox{\onemaxm}(x_1, \dots, x_n)= \onemax{x} = \sum_{j = 1}^{n}x_j$. This simple linear function of unitation returns the number of ones in a pseudo-Boolean input string. The \ea with mutation operators \umut{} and $\plow$ finds the global optimum after $\bigo{n \log n}$ fitness evaluations (see \cite{DBLP:conf/ppsn/Muhlenbein92,Droste:2002:AEA:568273.568277,DBLP:conf/gecco/DoerrLMN17}). It can be easily shown that the \ea with mutation operator $\clmut$ achieves similar performance on this instance.
\begin{lemma}
\label{lemma:clmut_onemax}
The \ea with mutation $\clmut$ finds the global optimum of the \onemaxm after expected $\bigo{H_n^{\beta}n \log n}$ fitness evaluations, for all $\beta > 1$ and  with the constant implicit in the asymptotic notation independent of $\beta$.
\end{lemma}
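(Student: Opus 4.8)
The plan is to apply the Multiplicative Drift Theorem (Theorem~\ref{thm:multiplicative_drift}) to the process $\{X_t\}_{t \geq 0}$, where $X_t$ denotes the number of zero-bits in the current solution at step $t$; note that $X_t = 0$ exactly when the all-ones string, the global optimum of \onemaxm, has been reached. Since the \ea is elitist and \onemaxm counts the ones, any accepted offspring has at least as many ones as its parent, so $X_{t+1} \leq X_t$ holds at every step and the process is non-increasing. This means every mutation outcome contributes non-negatively to the one-step drift $\expect{X_t - X_{t+1} \mid X_t}$, so it suffices to lower-bound the drift by isolating a single favorable event.

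The first step is to isolate the contribution of single-bit flips. By~\eqref{eq:hamming_distance}, the operator $\clmut$ chooses $k = 1$ (i.e.\ flips exactly one bit) with probability $\inverse{H_n^{\beta}} 1^{-\beta} = \inverse{H_n^{\beta}}$, after which the flipped position is uniform among the $n$ bits. Conditioned on $X_t$ zeros, the probability that this single flip hits a zero is $X_t/n$; such a flip turns a zero into a one, raises the \onemaxm value by one, and is therefore accepted, decreasing the potential by exactly one. Discarding all other (non-negative) contributions, this yields the multiplicative drift bound
\[
\expect{X_t - X_{t+1} \mid X_t} \;\geq\; \inverse{H_n^{\beta}} \cdot \frac{X_t}{n} \;=\; \frac{1}{H_n^{\beta}\, n}\, X_t ,
\]
so the drift condition holds with $\delta = \inverse{H_n^{\beta}\, n}$.

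It then remains to read off the remaining parameters and invoke the theorem. The state space is $\{0, 1, \dots, n\}$, so we may take $c_{\min} = 1$ and $c_{\max} = n$. Substituting into Theorem~\ref{thm:multiplicative_drift} gives
\[
\expect{T} \;\leq\; \frac{2}{\delta} \ln\!\left( 1 + \frac{c_{\max}}{c_{\min}} \right) \;=\; 2\, H_n^{\beta}\, n \, \ln(1 + n) \;=\; \bigo{H_n^{\beta}\, n \log n} ,
\]
which is the claimed bound; since the factor $H_n^{\beta}$ is made explicit and the remaining multiplier $2\ln(1+n)$ does not involve $\beta$, the implicit constant is independent of $\beta$, as required.

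I do not expect a serious obstacle here, as the argument is a textbook multiplicative-drift calculation. The only point demanding care is the justification that flips with $k \geq 2$ may be safely ignored in the lower bound: this relies on elitism guaranteeing $X_{t+1} \leq X_t$, so that larger jumps never increase the number of zeros and hence only help the drift. Ensuring the dependence on $H_n^{\beta}$ lands cleanly in $\delta$ (and thus factors out of the asymptotic constant) is likewise the one bookkeeping step worth double-checking.
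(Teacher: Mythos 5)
Your proof is correct, but it takes a genuinely different route from the paper. The paper proves Lemma~\ref{lemma:clmut_onemax} via the fitness level method of Wegener: it partitions the search space into levels $A_i = \{x : f(x) = i\}$, lower-bounds the probability of leaving level $A_i$ by $s_i = (n-i)\inverse{n H_n^{\beta}}$ (a single-bit flip of one of the $n-i$ remaining zeros, which is exactly the same favorable event you isolate), and then bounds the run time by $\sum_i 1/s_i \leq H_n^{\beta} n \log n$ via an integral estimate. You instead package the same per-step success probability into a multiplicative drift condition with potential $X_t$ equal to the number of zeros, $\delta = \inverse{H_n^{\beta} n}$, $c_{\min}=1$, $c_{\max}=n$, and invoke Theorem~\ref{thm:multiplicative_drift} to get $2 H_n^{\beta} n \ln(1+n)$ --- the same asymptotic bound. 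The two approaches trade off as follows: the fitness-level argument needs no monotonicity of the potential (it only needs lower bounds on level-leaving probabilities, which is why the paper reuses it verbatim for the non-unimodal $\jump$ in Lemma~\ref{lemma:generalJump}), while your drift argument hides the harmonic summation inside the drift theorem's logarithm but must explicitly justify, as you correctly do, that elitism plus the fact that the potential is a deterministic function of fitness makes $X_t$ non-increasing, so that $k \geq 2$ mutations can only help the drift. Your handling of the $\beta$-dependence is also sound: $H_n^{\beta}$ is factored out explicitly and the residual constant $2\ln(1+n)/\log n$ is $\beta$-free, matching the paper's claim.
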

\begin{proof}
We use the fitness level method outlined in Wegener \cite{Wegener:2001:TAE:646254.684097}. Define the levels $A_i = \{x \in \{0, 1\}^n \colon f(x) = i\}$, and consider the quantities $s_i = (n-i) \inverse{n H_n^{\beta}}$, for all $i = 1, \dots, n$. Then each $s_i$ is a lower-bound on the probability of reaching a higher fitness in one iteration. Denote with $T_{\clmut}(f)$ the run time of the \ea with mutation $\clmut$ on the function $f = $\onemaxm. By the fitness level theorem, we obtain an upper-bound on the run time as
\[
T_{\clmut}(f) \leq \sum_{i = 0}^{n - 1}\frac{1}{s_i} \leq H_n^{\beta} n \int_{0}^{n - 1} \frac{dx}{n - x} \leq H_n^{\beta} n \log n
\]
and the claim follows.
\end{proof}
\subsection{A comparison with static uniform mutations.}
\label{sec:jump}
Droste et al.~\cite{Droste:2002:AEA:568273.568277} defined the following jump function.
\[ 
\jump(x)=
\left \{
\begin{array}{ll}
m + \onemax{x} & \mbox{ if } \onemax{x} \leq n - m; \\
m + \onemax{x}  &\mbox{ if } \onemax{x} = n;\\
n - \onemax{x} & \mbox{ otherwise.}
\end{array}
\right .
\]
For $1 < m < n$ this function exhibits a single \emph{local} maximum and a single \emph{global} maximum. The first parameter of $\jump$ determines the Hamming distance between the local and the global optimum, while the second parameter denotes the size of the input. We present a general upper-bound on the run time of the \ea on $\jump$ with mutation operator $\clmut$. Then, following the footsteps of Doerr et al.~\cite{DBLP:conf/gecco/DoerrLMN17}, we compare the performance of $\clmut$ with static mutation operators on jump functions for all $m \leq n/2$. 

\begin{lemma}
\label{lemma:generalJump}
Consider a jump function $f = \jump$ and denote with $T_{\clmut}(f)$ the expected run time of the \ea using the mutation $\clmut$ on the function $f$. 
$T_{\clmut}(f) = H_n^{\beta} \binom{n}{m}\, \bigo{m^{\beta}}$,
were the constant implicit in the asymptotic notation is independent of $m$ and $\beta$.
\end{lemma}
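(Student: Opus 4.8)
The plan is to decompose a run of the \ea into two phases: a \emph{climbing} phase, in which the algorithm reaches the local optimum (any search point with exactly $n - m$ ones), and a \emph{jumping} phase, in which it crosses the fitness valley in a single mutation to reach the global optimum (the all-ones string). I would bound the expected length of each phase separately and then argue that the jumping phase dominates, so that its cost is exactly the claimed $H_n^\beta \binom{n}{m}\bigo{m^\beta}$.

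For the climbing phase, observe that on the region $\onemax{x} \leq n - m$ the function $\jump$ agrees, up to the additive constant $m$, with \onemaxm, hence is strictly increasing in the number of ones. I would reuse the fitness level argument from the proof of Lemma~\ref{lemma:clmut_onemax}: from a point with $j < n - m$ ones, sampling $k = 1$ under $D_n^\beta$ (probability $\inverse{H_n^\beta}$) and flipping one of the $n - j$ zero-bits strictly improves the fitness, so the improvement probability is at least $(n - j)\inverse{n H_n^\beta}$; summing reciprocals over the levels gives a climbing time of $\bigo{H_n^\beta n \log n}$. Before invoking this bound I would verify that the valley cannot interfere: every point with $n - m < \onemax{x} < n$ has fitness at most $m - 1$, strictly below the fitness $m + j \geq m$ of every point in the climbing region, so a rare mutation into the valley is rejected by the elitist selection; and if the initial point happens to lie in the valley, the only accepted moves decrease the number of ones toward the local optimum, which is covered by the same level argument.

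For the jumping phase, the key observation is that once the algorithm sits at fitness $n$, the only search points of fitness at least $n$ are the local optima and the global optimum, so the algorithm never drops below this level and remains on the plateau until it jumps. From \emph{any} point with exactly $n - m$ ones, reaching the all-ones string requires flipping precisely the $m$ zero-bits and no others; under $\clmut$ this occurs only when the sampled value is $k = m$ (probability $\inverse{H_n^\beta} m^{-\beta}$) and the $m$ flipped positions coincide with the $m$ zeros (probability $\inverse{\binom{n}{m}}$). Thus every plateau step makes the jump with the fixed probability $\inverse{H_n^\beta} m^{-\beta}\inverse{\binom{n}{m}}$, \emph{independent} of which local optimum is currently occupied, so the arbitrary sideways moves along the plateau do not affect the waiting time. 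Modeling this as a geometric random variable yields an expected jumping time of $H_n^\beta m^\beta \binom{n}{m}$.

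Combining the phases gives $T_{\clmut}(f) \leq \bigo{H_n^\beta n \log n} + H_n^\beta m^\beta \binom{n}{m}$, and since $\binom{n}{m} = \Omega(n^2)$ for $m \geq 2$ the jumping term absorbs the climbing term, yielding $H_n^\beta \binom{n}{m}\bigo{m^\beta}$. I expect the main obstacle to be the bookkeeping around the valley and the plateau rather than any single difficult estimate: one must confirm that no transition other than the exact $m$-bit flip from a local optimum can reach the global optimum (in particular that flipping $k > m$ bits from a point with $n - m$ ones can never produce the all-ones string), and that the per-step jump probability is genuinely the same across all $\binom{n}{m}$ local optima, which is precisely what makes the geometric-waiting-time argument legitimate despite the plateau moves.
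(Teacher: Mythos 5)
Your proposal is correct and takes essentially the same route as the paper: the paper's proof is a single fitness-level computation whose levels are exactly your climbing and valley levels plus one jump level with leaving probability $\binom{n}{m}^{-1}\inverse{H_n^\beta}m^{-\beta}$, yielding the same two terms ($\bigo{H_n^\beta n \log n}$ for everything else, $H_n^\beta m^\beta \binom{n}{m}$ for the jump) and the same absorption step, with your plateau and valley bookkeeping handled implicitly by the fitness-level theorem. One small nit: your justification $\binom{n}{m} = \Omega(n^2)$ for $m \geq 2$ fails at $m = n-1$ (where $\binom{n}{m} = n$), but there $m^{\beta}\binom{n}{m} = \Omega(n^{1+\beta})$ still dominates the climbing term since $\beta > 1$, so the conclusion stands --- the paper's own chain $2nH_n^{\beta}\ln n \leq 2H_n^{\beta}\binom{n}{m}$ is loose at the same corner case.
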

\begin{proof}
We use the fitness level method. Define the levels $A_i = \left \{x \in \{0, 1\}^n \colon f(x) = i \right \}$ for all $i = 1, \dots, n$, and consider the quantities
\[
s_i = \left \{
\begin{array}{l}
(n-i) \inverse{n H_n^{\beta}},\ 0 \leq i \leq n-m-1;\\
\binom{n}{m}^{-1} \inverse{H_n^{\beta}} m^{-\beta},\ i=n-m;\\
i \inverse{n H_n^{\beta}},\ n-m+1 \leq i \leq n-1;
\end{array}
\right .
\]
Then each $s_i$ is a lower bound for the probability of reaching a higher fitness in one iteration from the level $A_i$. By the fitness level theorem we obtain an upper bound on the run time as
\begin{align*}
T_{\clmut}(f) & \leq  \binom{n}{m} H_n^{\beta} m^{\beta} + \sum_{i = 0}^{n - m - 1}\frac{n H_n^{\beta}}{n - i} + \sum_{i = n - m + 1}^{n - 1}\frac{n H_n^{\beta}}{i}\\
& \leq \binom{n}{m} H_n^{\beta} m^{\beta} + 2 n H_n^{\beta} \int_m^n \frac{dx}{x} = \binom{n}{m} H_n^{\beta} m^{\beta} + 2 n H_n^{\beta} \ln \frac{n}{m},
\end{align*}
for any choice of $\beta > 1$. Since we have that $1 < m < n$ and $m \geq 2$, then it follows that
\[
2 n H_n^{\beta} \ln \frac{n}{m} \leq 2 n H_n^{\beta} \ln n\leq 2H_n^{\beta} \binom{n}{m},
\]
and the lemma follows.
\end{proof}

Note that the upper-bound on the run time given in Lemma \ref{lemma:generalJump} yields polynomial run time on all functions $\jump$ with $m$ constant for increasing problem size and also with $n - m$ constant for increasing problem size. 

Following the analysis of Doerr et al. \cite{DBLP:conf/gecco/DoerrLMN17}, we can compare the run time of the \ea with mutation $\clmut$ with the \ea with uniform mutations, on the jump function $\jump$ for $m \leq n/2$.

\begin{corollary}
Consider a jump function $f = \jump$ with $m \leq n/2$ and denote with $T_{\clmut}(f)$ the run time of the \ea using the mutation $\clmut$ on the function $f$. Similarly, denote with $T_{\optmut}(f)$ the run time of the \ea using the best possible static uniform mutation on the function $f$. Then it holds $T_{\clmut} (f) \leq c m^{\beta - 0.5} \, H_n^{\beta} \, T_{\optmut}(f)$, for a constant $c$ independent of $m$ and $\beta$.
\end{corollary}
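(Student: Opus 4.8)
The plan is to combine the upper bound on $T_{\clmut}(f)$ furnished by Lemma~\ref{lemma:generalJump} with a matching lower bound on $T_{\optmut}(f)$, and then simplify the resulting ratio using Stirling's formula together with the hypothesis $m \leq n/2$. From Lemma~\ref{lemma:generalJump} we already have $T_{\clmut}(f) \leq c_1\, H_n^{\beta} \binom{n}{m} m^{\beta}$ for a constant $c_1$ independent of $m$ and $\beta$, so the whole argument hinges on showing that the best static uniform mutation needs at least on the order of $\binom{n}{m}\sqrt{m}$ moves.

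To lower-bound $T_{\optmut}(f)$ I would first isolate the bottleneck of the landscape. Inspecting the definition of $\jump$, the plateau $\{x : \onemax{x} = n-m\}$ has fitness $n$, the intermediate band $n-m < \onemax{x} < n$ has fitness strictly below $n$, and the only point improving on the plateau is the all-ones string. Hence, under elitist selection, the sole accepted improving transition out of the local optimum flips exactly the $m$ zero-bits while preserving the $n-m$ one-bits. With a static per-bit rate $q = p/n$ this event has probability $q^m (1-q)^{n-m}$, which is maximized at $q = m/n$; consequently any static rate gives a jump probability at most $(m/n)^m (1 - m/n)^{n-m}$, and the expected escape time is at least its reciprocal. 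This yields $T_{\optmut}(f) = \bigTheta{n^n / (m^m (n-m)^{n-m})}$, the full two-sided bound being exactly the characterization of Doerr et al.~\cite{DBLP:conf/gecco/DoerrLMN17}, which I would invoke to absorb the lower-order climbing phases.

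The remaining step is computational. Applying Stirling's approximation to $\binom{n}{m} = n!/(m!(n-m)!)$, the exponential parts of the factorials cancel and leave $\binom{n}{m} = \bigTheta{\sqrt{n/(m(n-m))}\cdot n^n / (m^m(n-m)^{n-m})}$, so that $n^n/(m^m(n-m)^{n-m}) = \bigTheta{\binom{n}{m}\sqrt{m(n-m)/n}}$. Because $m \leq n/2$ forces $(n-m)/n \in [1/2, 1)$, the factor $\sqrt{(n-m)/n}$ is pinned between two absolute constants, whence $T_{\optmut}(f) = \bigTheta{\binom{n}{m}\sqrt{m}}$. Dividing the upper bound on $T_{\clmut}(f)$ by this lower bound on $T_{\optmut}(f)$ cancels the common factor $\binom{n}{m}$ and gives $T_{\clmut}(f)/T_{\optmut}(f) \leq c\, H_n^{\beta}\, m^{\beta}/\sqrt{m} = c\, H_n^{\beta}\, m^{\beta - 0.5}$, which is the claim; note $c$ depends neither on $m$ nor on $\beta$, since $T_{\optmut}$ does not involve $\beta$ and the interval constant is universal.

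The main obstacle is the lower bound on $T_{\optmut}(f)$: one must argue rigorously that even the best static rate is forced to pay the full cost of an $m$-bit jump, and that $q = m/n$ is genuinely optimal. A self-contained proof would require either a drift argument showing the process is overwhelmingly likely to traverse the local optimum before reaching the global one, or a careful accounting that the climbing phases contribute only lower-order terms; here I would lean on the precise run-time analysis of Doerr et al.~\cite{DBLP:conf/gecco/DoerrLMN17}. By contrast, the Stirling conversion and the final division are routine.
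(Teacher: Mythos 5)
Your proposal is correct and follows essentially the same route as the paper: combine the upper bound of Lemma~\ref{lemma:generalJump} with the lower bound $T_{\optmut}(f) \geq \tfrac{1}{2}\, n^m/m^m \,(n/(n-m))^{n-m}$ from Doerr et al.~\cite{DBLP:conf/gecco/DoerrLMN17}, which the paper likewise invokes without reproving. Your explicit Stirling conversion of $n^n/(m^m(n-m)^{n-m})$ into $\Theta\bigl(\binom{n}{m}\sqrt{m}\bigr)$ for $m \leq n/2$ correctly fills in the computation the paper leaves implicit.
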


The result above holds because Doerr et al. \cite{DBLP:conf/gecco/DoerrLMN17} prove that the best possible optimization time for a static mutation rate a function $f = \jump $ with $m \leq n/2$ is lower-bounded as $1/2 \, n^m/m^m \, (n/(n - m))^{n - m} \leq T_{\optmut}(f)$.
%
% ------------------------------ Submodular Function Maximization
%
\section{The Unconstrained Submodular Maximization Problem}\label{sec:submodular}
We study the problem of maximizing a non-negative submodular function $f:2^V \rightarrow \mathbb{R}_{\geq 0}$ with no side constraints. More formally, we study the problem
\begin{equation}
\label{probl_1}
\mbox{argmax}_{C \subseteq V}f(C).
\end{equation}

This problem is $\mathsf{APX}$-complete. That is, this problem is $\mathsf{NP}$-hard and  does not admit a polynomial time approximation scheme (PTAS), unless $\mathsf{P} = \mathsf{NP}$ (see Nemhauser and Wolsey \cite{DBLP:journals/mor/NemhauserW78}).

We denote with $\opt$ any solution of Problem \eqref{probl_1}, and we denote with $n$ the size of $V$.

\subsection{Heavy-tailed mutations are useful.}
\label{sec:submodular-bound}

We prove that the \ea with mutation $\clmut$ is a $(1/3 - \varepsilon /n)$-approximation algorithm for Problem \ref{probl_1}. In our analysis we assume neither monotonicity nor symmetry. We approach this problem by searching for $(1 + \alpha)$-local optima, which we define below.
\begin{definition}
\label{def:local_optima}
Let $f\colon 2^V\rightarrow \mathbb{R}_{\geq 0}$ be any submodular function. A set $\lm \subseteq V$ is a $(1 + \alpha)$-local optimum if it holds $(1 + \alpha)f(\lm) \geq f(\lm \setminus \{u\})$ for all $u \in \lm$, and $(1 + \alpha)f(\lm) \geq f(\lm \cup \{ v\})$ for all $v \in V \setminus \lm$, for a constant $\alpha > 0$.
\end{definition}

This definition is useful in the analysis because it can be proved that either $(1 + \alpha)$-local optima or their complement always yield a good approximation of the global maximum, as in the following theorem.
\begin{theorem}[Theorem 3.4 in Feige et al.~\cite{DBLP:journals/siamcomp/FeigeMV11}]
\label{thm:feige}
Consider a non-negative submodular function $f\colon 2^V\rightarrow \mathbb{R}_{\geq 0}$ and let $\lm$ be a $(1 + \varepsilon/n^2)$-local optimum as in Definition \ref{def:local_optima}. Then either $\lm$ or $V \setminus \lm$ is a $(1/3 - \varepsilon/n)$-approximation of the global maximum of $f$.  
\end{theorem}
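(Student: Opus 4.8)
The plan is to prove the stronger, fully explicit inequality
$f(C)\le 2(1+\varepsilon/n)\,f(\lm)+f(V\setminus\lm)$, where $C$ denotes a global maximizer of $f$ (so $f(C)=f(\opt)$). This immediately gives $\max\{f(\lm),f(V\setminus\lm)\}\ge f(C)/(3+2\varepsilon/n)\ge(1/3-\varepsilon/n)\,f(C)$, which is the claim. Everything reduces to two ingredients: approximate local optimality, used to compare $f(\lm)$ with $f(\lm\cap C)$ and $f(\lm\cup C)$; and two plain applications of the submodularity axiom of Definition~\ref{def:subm_func}, used to split $f(C)$ across the partition $\{\lm,\,V\setminus\lm\}$.

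First I would prove an approximate ``monotonization'' lemma: $f(\lm\cap C)\le(1+\varepsilon/n)\,f(\lm)$ and $f(\lm\cup C)\le(1+\varepsilon/n)\,f(\lm)$. For the first, list the elements of $\lm\setminus C$ and remove them from $\lm$ one at a time, writing $f(\lm)-f(\lm\cap C)$ as a telescoping sum of deletion marginals. By the equivalent form \eqref{def:submod_1} of submodularity, each such marginal is at least the corresponding marginal taken at $\lm$ itself, which by the local optimality condition $(1+\varepsilon/n^2)f(\lm)\ge f(\lm\setminus\{u\})$ of Definition~\ref{def:local_optima} is at least $-(\varepsilon/n^2)f(\lm)$. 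Since at most $n$ elements are removed, the accumulated slack is at least $-(\varepsilon/n)f(\lm)$, giving the bound; the upward statement is symmetric, using the insertion conditions $(1+\varepsilon/n^2)f(\lm)\ge f(\lm\cup\{v\})$ along a chain that inserts the elements of $C\setminus\lm$.

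Next I would invoke the submodularity inequality $f(A)+f(B)\ge f(A\cup B)+f(A\cap B)$ twice. Taking $A=\lm\cap C$ and $B=C\setminus\lm$ gives $A\cup B=C$ and $A\cap B=\emptyset$, so $f(\lm\cap C)+f(C\setminus\lm)\ge f(C)+f(\emptyset)\ge f(C)$ by non-negativity. Taking $A=\lm\cup C$ and $B=V\setminus\lm$ gives $A\cup B=V$ and $A\cap B=C\setminus\lm$, so $f(\lm\cup C)+f(V\setminus\lm)\ge f(V)+f(C\setminus\lm)\ge f(C\setminus\lm)$, again by non-negativity. Chaining these two inequalities eliminates $f(C\setminus\lm)$ and yields $f(C)\le f(\lm\cap C)+f(\lm\cup C)+f(V\setminus\lm)$; substituting the two monotonization bounds produces the target inequality, after which the elementary estimate $1/(3+2\varepsilon/n)\ge 1/3-\varepsilon/n$ closes the argument.

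The main obstacle I anticipate is the bookkeeping around the approximate local optimum: the per-step guarantee in Definition~\ref{def:local_optima} is \emph{multiplicative}, so I must be careful that the errors accumulate \emph{additively} through the telescoping of marginals rather than compounding, and that using at most $n$ steps is exactly what turns the $\varepsilon/n^2$ slack into an $\varepsilon/n$ loss. A secondary point is to verify that non-negativity is genuinely used (and correctly invoked) when discarding $f(\emptyset)$ and $f(V)$, and to justify the slack in the final estimate $1/(3+2\varepsilon/n)\ge 1/3-\varepsilon/n$ uniformly over the relevant range of $n$.
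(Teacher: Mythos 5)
Your proposal is correct: the telescoping of deletion/insertion marginals turns the multiplicative $(1+\varepsilon/n^2)$ slack into the additive bounds $f(\lm\cap C),f(\lm\cup C)\le(1+\varepsilon/n)f(\lm)$, and the two submodularity applications chaining through $f(C\setminus\lm)$ give $f(C)\le 2(1+\varepsilon/n)f(\lm)+f(V\setminus\lm)$, which yields the stated ratio. The paper itself states this theorem without proof, importing it from Feige et al.~\cite{DBLP:journals/siamcomp/FeigeMV11}, and your argument is essentially a faithful reconstruction of their original one (their Lemma~3.3 plus the proof of their Theorem~3.4), so there is nothing to flag.
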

It is possible to construct examples of submodular functions that exhibit $(1 + \varepsilon/n^2)$-local optima with arbitrarily bad approximation ratios. Thus, $(1 + \varepsilon/n^2)$-local optima alone do not yield any approximation guarantee for Problem \eqref{probl_1}, unless the fitness function is symmetric.

We can use Theorem \ref{thm:feige} to estimate the run time of the \ea using mutation $\clmut$ to maximize a given submodular function. Intuitively, it is always possible to find a $(1 + \varepsilon/n^2)$-local optimum in polynomial time using single bit-flips. It is then possible to compare the approximate local solution $S$ with its complement $V\setminus S$ by flipping all bits in one iteration.

We do not perform the analysis on a given submodular function $f$ directly, but we consider a corresponding \emph{potential function} $g_{f, \varepsilon}$ instead. We define potential functions as in the following lemma.
\begin{lemma}
\label{lemma:potential_function}
Consider a non-negative submodular function $f\colon 2^V\rightarrow \mathbb{R}_{\geq 0}$. Consider the function $g_{f,\varepsilon}(U) = f(U) + \varepsilon \frac{\opt}{n}$, for all $U \subseteq V$. The following conditions hold
\begin{enumerate}[(1)]
\item $g_{f,\varepsilon}(U)$ is submodular.
\item $g_{f,\varepsilon}(U) \geq \varepsilon \,\opt/n$, for all subsets $U \subseteq V$.
\item Suppose that a solution $U \subseteq V$ is a $\delta$-approximation for $g_{f,\varepsilon}$, for a constant $0 < \delta < 1$. Then $U$ is a $(\delta - \varepsilon/n)$-approximation for $f$.
\end{enumerate}
\end{lemma}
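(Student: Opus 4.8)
The plan is to treat the three claims separately, noting that $g_{f,\varepsilon}$ differs from $f$ only by the additive constant $c := \varepsilon\,\opt/n \geq 0$ (here I read $\opt$ as the optimal \emph{value} $\max_{C\subseteq V} f(C)$, since the expression $\varepsilon\,\opt/n$ is numeric), and that essentially all of the content lives in part~(3). For~(1), I would add $c$ to both sides of the submodularity inequality for $f$: since $g_{f,\varepsilon}(S) + g_{f,\varepsilon}(T) = f(S) + f(T) + 2c$ and likewise $g_{f,\varepsilon}(S\cup T) + g_{f,\varepsilon}(S\cap T) = f(S\cup T) + f(S\cap T) + 2c$, the defining inequality of Definition~\ref{def:subm_func} for $f$ carries over verbatim to $g_{f,\varepsilon}$; shifting by a constant cannot destroy submodularity. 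For~(2), non-negativity of $f$ gives $g_{f,\varepsilon}(U) = f(U) + c \geq c = \varepsilon\,\opt/n$ for every $U$, which is exactly the claim.

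The real work is in~(3). First I would identify the optimum of the potential function: since $g_{f,\varepsilon} = f + c$ with $c$ constant, the maximizer of $f$ is also a maximizer of $g_{f,\varepsilon}$, so $\max_{C} g_{f,\varepsilon}(C) = \opt + c = \opt\left(1 + \varepsilon/n\right)$. Unfolding the hypothesis that $U$ is a $\delta$-approximation for $g_{f,\varepsilon}$ then gives $f(U) + c = g_{f,\varepsilon}(U) \geq \delta\,\opt\left(1 + \varepsilon/n\right)$. Subtracting $c = \varepsilon\,\opt/n$ and rearranging yields $f(U) \geq \delta\,\opt - (1-\delta)\,\varepsilon\,\opt/n$. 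Finally I would invoke $0 < \delta < 1$, so that $(1-\delta) \leq 1$ and the subtracted term is at most $\varepsilon\,\opt/n$, giving $f(U) \geq \left(\delta - \varepsilon/n\right)\opt$, i.e.\ $U$ is a $(\delta - \varepsilon/n)$-approximation for $f$.

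The one point that requires care — and the only place the argument could slip — is the bookkeeping of the additive shift: the $\delta$-approximation guarantee for $g_{f,\varepsilon}$ is stated relative to the \emph{shifted} optimum $\opt(1+\varepsilon/n)$, not relative to $\opt$ itself, so one must track how the factor $\delta$ multiplies the shift to confirm that the degradation is precisely the \emph{additive} $\varepsilon/n$ claimed, rather than some multiplicative factor. The fact that $\delta<1$ is what makes the leftover term $(1-\delta)\,\varepsilon\,\opt/n$ absorbable into a single $\varepsilon/n$ loss; everything else reduces to a one-line consequence of the definitions.
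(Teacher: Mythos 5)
Your proof is correct and follows essentially the same route as the paper's: parts (1) and (2) are the same one-line observations about an additive constant shift, and your part (3) unfolds the $\delta$-approximation guarantee relative to $\max_C g_{f,\varepsilon}(C) = \opt(1+\varepsilon/n)$, arriving at $f(U) \geq \delta\,\opt - (1-\delta)\,\varepsilon\,\opt/n \geq (\delta - \varepsilon/n)\,\opt$, which is exactly the paper's chain of inequalities. Your explicit identification of the shifted optimum (which the paper uses only implicitly) is sound bookkeeping, not a deviation.
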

\begin{proof}
\emph{(1)} The submodularity of $g_{f,\varepsilon}(U)$ follows immediately from the fact that $f(U)$ is submodular, together with the fact that the term $\varepsilon \opt/n$ is constant. \emph{(2)} The property follows directly from the definition of $g_{f,\varepsilon}(U)$, together with the assumption that $f$ is non-negative. \emph{(3)} Fix a subset $U \subseteq V$ that is an $\delta$-approximation for $g_{f,\varepsilon}$. Then we have that
\[
 g_{f,\varepsilon}(U) \geq \delta \left ( \opt + \varepsilon \frac{\opt}{n} \right )\Rightarrow f(U) \geq \delta \left (\opt + \varepsilon \frac{\opt}{n} \right ) - \varepsilon \frac{\opt}{n}.
\]
It follows that
\[
f(U) \geq \delta \opt - (1 - \delta) \varepsilon \frac{\opt}{n} \geq \delta \opt - \varepsilon \frac{\opt}{n},
\]
where the last inequality follows from the assumption that $0 < \delta < 1$. The lemma follows.
\end{proof}
Using potential functions and their properties, we can prove the following result.
\begin{theorem} 
\label{thm:maxCut}
The \ea with mutation $\clmut$ is a $(1/3 - \varepsilon/n)$-approximation algorithm for Problem \eqref{probl_1}. Its expected run time is $\bigo{\frac{1}{\varepsilon} n^3 \log \frac{n}{\varepsilon} + n^{\beta}}$.
\end{theorem}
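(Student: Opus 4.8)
The plan is to analyze the \ea in two phases, working throughout with the potential function $g = g_{f,\varepsilon}$ from Lemma~\ref{lemma:potential_function} rather than with $f$ directly. The reason to pass to $g$ is that it is submodular and, crucially, bounded away from zero: by parts (1)--(2) of Lemma~\ref{lemma:potential_function} we have $g(U) \in [\varepsilon\,\opt/n,\ (1+\varepsilon/n)\opt]$ for every $U\subseteq V$, so $g_{\max}/g_{\min} \leq n/\varepsilon + 1$. Since $g$ and $f$ differ by an additive constant, the \ea run on $f$ makes exactly the same accept/reject decisions as it would on $g$, and by elitism $g(S_t)$ is non-decreasing along the run. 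I would fix $\alpha = \varepsilon/n^2$ and aim first to reach a $(1+\alpha)$-local optimum of $g$ in the sense of Definition~\ref{def:local_optima}, then exploit Theorem~\ref{thm:feige}.

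For Phase~1 (reaching a $(1+\alpha)$-local optimum), note that as long as the current $S_t$ is not one, Definition~\ref{def:local_optima} guarantees a single element whose addition or removal gives a set $S'$ with $g(S')\geq(1+\alpha)g(S_t)$; as this strictly increases $f$, the \ea accepts it. By \eqref{eq:hamming_distance}, $\clmut$ flips exactly one bit with probability $(H_n^\beta)^{-1}$ and the correct bit with probability $1/n$, so any prescribed single-bit flip occurs with probability at least $1/(nH_n^\beta)$. I would call such an improving step \emph{productive}: each productive step multiplies $g$ by at least $(1+\alpha)$, and since $g$ is non-decreasing and confined to $[\varepsilon\opt/n,(1+\varepsilon/n)\opt]$, the total number of productive steps over the whole run is at most $\log_{1+\alpha}(n/\varepsilon+1) = \bigo{\frac{n^2}{\varepsilon}\log\frac{n}{\varepsilon}}$, using $\ln(1+\alpha)=\Theta(\alpha)$ and $\alpha=\varepsilon/n^2$. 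Dominating the waiting time for each productive step by a geometric variable of success probability $1/(nH_n^\beta)$ bounds the expected time to reach a $(1+\alpha)$-local optimum by $\bigo{\frac{n^2}{\varepsilon}\log\frac{n}{\varepsilon}\cdot nH_n^\beta} = \bigo{\frac{1}{\varepsilon}n^3\log\frac{n}{\varepsilon}}$, the first term of the claim, since $H_n^\beta$ is constant for fixed $\beta>1$.

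For Phase~2 (the complementing flip), once $S_t$ is a $(1+\alpha)$-local optimum of the submodular function $g$, Theorem~\ref{thm:feige} applied to $g$ yields that either $S_t$ or $V\setminus S_t$ is a $(1/3-\varepsilon/n)$-approximation of $\max g$. If $g(S_t)$ already exceeds this threshold we are done by elitism; otherwise the complement is the good set, $g(V\setminus S_t)>g(S_t)$, and the single iteration in which $\clmut$ flips all $n$ bits—an event of probability $(H_n^\beta)^{-1}n^{-\beta}$ by \eqref{eq:hamming_distance}—reaches $V\setminus S_t$ and is accepted. The expected wait for this all-bits flip is $H_n^\beta n^\beta = \bigo{n^\beta}$, the second term. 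Finally, part~(3) of Lemma~\ref{lemma:potential_function} turns a $(1/3-\varepsilon/n)$-approximation of $g$ into a $(1/3-2\varepsilon/n)$-approximation of $f$, and rescaling $\varepsilon$ by a constant factor (which changes only the hidden constants in the run time) recovers the stated $(1/3-\varepsilon/n)$ guarantee for $f$; adding the two phase bounds gives $\bigo{\frac{1}{\varepsilon}n^3\log\frac{n}{\varepsilon}+n^\beta}$.

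The hard part will be making the two-phase decomposition rigorous and additive rather than multiplicative. Between productive steps, and after first hitting a local optimum, the \ea may accept neutral or multi-bit moves and drift among equal-$g$ solutions, possibly leaving the local optimum before the all-bits flip fires. The clean fix is to bound everything by a single hitting time $T$, the first step at which $g(S_t)$ crosses the approximation threshold: before $T$, the current solution is either not a local optimum (a productive single flip of probability $\geq 1/(nH_n^\beta)$ is available) or is one (the complementing all-bits flip of probability $\geq (H_n^\beta n^\beta)^{-1}$ reaches the threshold), while the number of productive flips remains capped by $\bigo{\frac{n^2}{\varepsilon}\log\frac{n}{\varepsilon}}$. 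The delicate bookkeeping is then to charge the productive-flip waits to the first term and the single complementing wait to the second, so that $\expect{T}$ is bounded by their sum rather than their product.
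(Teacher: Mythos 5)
Your proposal is correct and follows essentially the same route as the paper's proof: the same potential function $g_{f,\varepsilon}$ from Lemma~\ref{lemma:potential_function}, the same multiplicative-increase count of $\bigo{\frac{1}{\varepsilon}n^2\log\frac{n}{\varepsilon}}$ favorable single-bit flips each with success probability $\Omega(1/n)$, the same $n$-bit complementing flip with probability $\Omega(n^{-\beta})$ via Theorem~\ref{thm:feige}, and the same conversion to a $(1/3-2\varepsilon/n)$-approximation of $f$ followed by rescaling $\varepsilon$. Your closing hitting-time bookkeeping (charging waits in non-locally-optimal states to the first term and waits at local optima to the second, with the productive-flip count bounded globally by monotonicity of $g$) is a sound tightening of a point the paper's two-phase presentation treats informally, but it does not change the underlying argument.
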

\begin{proof}
We prove that for all $\varepsilon > 0$, the \ea with mutation $\clmut$ finds a $(1/3 - \varepsilon/n)$-approximation of $g_{f, \varepsilon}$ (as in Lemma \ref{lemma:potential_function}) within expected $\bigo{\frac{1}{\varepsilon}n^3 \log \frac{n}{\varepsilon} + n^{\beta}}$ fitness evaluations. We then use this knowledge to conclude that the \ea with mutation $\clmut$ finds a $(1/3 - 2 \varepsilon/n)$-approximation of $f$ within $ \bigo{n^{\beta} + \frac{1}{\varepsilon}n^3 \log \frac{n}{\varepsilon}}$ fitness evaluations and the theorem follows.

We divide the run time in two phases. During (Phase 1), the \ea finds a $(1 + \varepsilon /n^2)$-local optimum of $g_{f, \varepsilon }$. During (Phase 2) the algorithm finds a $(1/3 - \varepsilon/n)$-approximation of the global optimum of $g_f$ using the heavy-tailed mutation.

\emph{(Phase 1)} We use the multiplicative increase method. Denote with $x_t$ the solution found by the \ea at time step $t$, for all $t \geq 0$. Then for any solution $x_t$ it is always possible to make an improvement of $(1 + \varepsilon/n^2)g_{f, \varepsilon}(x_t)$ on the fitness in the next iteration, by adding or removing a single vertex, unless $x_t$ is already a $(1 + \varepsilon /n^2)$-local optimum. We refer to any single bit-flip that yields such an improvement of a fitness as \emph{favorable bit-flip}. We give an upper-bound on the number of favorable bit-flips $k$ to reach a $(1 + \varepsilon /n^2)$-local optimum, by solving the following equation
\[
\left ( 1 + \frac{\varepsilon}{n^2} \right )^k \varepsilon \frac{\opt}{n} \leq \opt + \varepsilon \frac{\opt}{n}\Leftrightarrow \left ( 1 + \frac{\varepsilon}{n^2} \right )^k \leq \frac{n}{\varepsilon} + 1,
\]
where we have used that that for the initial solution $x_0$, $g_{f, \varepsilon} (x_0) \geq \varepsilon \opt / n$ (see Lemma \ref{lemma:potential_function}(2)). From solving this inequality it follows that the \ea with mutation $\clmut$ reaches a $(1 + \varepsilon/n^2)$-local maximum after at most $k = \bigo{\frac{1}{\varepsilon}n^2 \log \frac{n}{\varepsilon}}$ favorable bit-flips. Since the probability of performing a single chosen bit-flip is at least $\inverse{H_n^{\beta}}n^{-1} = \Omega(1/n)$, then the expected waiting time for a favorable bit-flip to occur is $\bigo{n}$, we can upper-bound the expected run time in this initial phase as $\bigo{\frac{1}{\varepsilon} n^3\log \frac{n}{\varepsilon}}$.

\emph{(Phase 2)} Assume that a $(1 + \varepsilon/n^2)$-local optimum has been found. Then from Theorem \ref{thm:feige} follows that either this local optimum or its complement is a $(1/3 - \varepsilon /n)$-approximation of the global maximum. Thus, if the solution found in Phase $1$ does not yield the desired approximation ratio, a $n$-bit flip is sufficient to find a $(1/3 - \varepsilon /n)$-approximation of the global optimum of $g_f$. The probability of this event to occur is at least $\inverse{H_n^{\beta}}n^{-\beta} = \Omega (n^{-\beta})$ by \eqref{eq:hamming_distance}. After an additional phase of expected $\bigo{n^{\beta}}$ fitness evaluations the \ea with mutation $\clmut$ reaches the desired approximation of the global maximum.
\end{proof}
%
% ---------- Improved upper-bound for the (1 + 1)EA
%
\subsection{An improved upper-bound on the run time.}\label{sec:large_populations_unconstrained}
We prove that the \ea with mutation $\clmut$ yields an improved upper-bound on the run time over that of Theorem \ref{thm:maxCut}, at least with constant probability. This upper-bound yields an improvement over the run time analysis of a standard deterministic Local Search (LS) algorithm (see Theorem 3.4 Feige et al.~\cite{DBLP:journals/siamcomp/FeigeMV11}), at least with constant probability.
\begin{theorem}[Theorem 2.1 in Feige et al.~\cite{DBLP:journals/siamcomp/FeigeMV11}]
\label{thm:vondrak_new}
Let $f:2^V  \rightarrow \mathbb{R}_{\geq 0}$ be a submodular function, and denote with $R\subseteq V$ a set chosen uniformly at random. Then $\expect{f(R)}\geq \opt / 4$.
\end{theorem}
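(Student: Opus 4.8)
The plan is to reduce the statement to a single averaging lemma and then apply that lemma twice. Write $C = \opt$ for a fixed optimal solution, so that $f(C)$ is the optimal value and the claim reads $\expect{f(R)} \geq f(C)/4$, where $R$ contains each element of $V$ independently with probability $1/2$.

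The heart of the argument is the following averaging lemma, which I would isolate first: for every non-negative submodular $g\colon 2^U \to \mathbb{R}_{\geq 0}$ and a uniformly random $S \subseteq U$,
\[
\expect{g(S)} \;\geq\; \tfrac12\bigl(g(\emptyset) + g(U)\bigr) \;\geq\; \tfrac12\, g(U).
\]
I would prove this by induction on $\absl{U}$. The base case $\absl{U}=1$ is an equality. For the inductive step, pick an element $u \in U$ and average it out by defining $g'(T) = \tfrac12 g(T) + \tfrac12 g(T \cup \{u\})$ on ground set $U \setminus \{u\}$. The function $g'$ is again non-negative, as an average of non-negative values, and submodular, since both $T \mapsto g(T)$ and the contraction $T \mapsto g(T \cup \{u\})$ are submodular and averages of submodular functions are submodular. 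Because drawing $S$ uniformly from $U$ is the same as drawing $T$ uniformly from $U\setminus\{u\}$ and then adding $u$ with probability $1/2$, we have $\expect{g(S)} = \expect{g'(T)}$, so the induction hypothesis gives $\expect{g'(T)} \geq \tfrac12\bigl(g'(\emptyset)+g'(U\setminus\{u\})\bigr)$. Expanding the two terms, this lower bound equals $\tfrac14\bigl(g(\emptyset)+g(\{u\})+g(U\setminus\{u\})+g(U)\bigr)$, and it is at least $\tfrac12(g(\emptyset)+g(U))$ precisely when $g(\{u\})+g(U\setminus\{u\}) \geq g(\emptyset)+g(U)$, which is exactly Definition \ref{def:subm_func} applied with the disjoint pair $\{u\}$ and $U\setminus\{u\}$.

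With the lemma in hand, I would peel $R$ into its restriction to $C$ and to $\bar C = V \setminus C$, which are independent and each uniform. For a fixed $D \subseteq \bar C$, applying the lemma to $g(S) = f(S \cup D)$ on ground set $C$ gives $\mathbb{E}_{S \subseteq C}\!\left[ f(S \cup D) \right] \geq \tfrac12\bigl(f(D) + f(C\cup D)\bigr) \geq \tfrac12 f(C \cup D)$. Taking the expectation over $D$ and applying the lemma a second time to $h(D) = f(C \cup D)$ on ground set $\bar C$ gives $\mathbb{E}_{D \subseteq \bar C}\!\left[ f(C \cup D) \right] \geq \tfrac12\bigl(f(C) + f(V)\bigr) \geq \tfrac12 f(C)$. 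Chaining these by the tower rule, $\expect{f(R)} = \mathbb{E}_{D}\bigl[\mathbb{E}_{S}[f(S\cup D)]\bigr] \geq \tfrac12\,\mathbb{E}_{D}\!\left[ f(C\cup D) \right] \geq \tfrac14 f(C)$, which is the claim.

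The main obstacle is establishing the averaging lemma; the two-fold reduction afterwards is routine bookkeeping. Within the lemma the two delicate points are verifying that restriction and contraction preserve both non-negativity and submodularity, so that $g'$ is a legitimate input to the induction hypothesis, and checking that the single inequality invoked in the inductive step is literally the defining axiom of Definition \ref{def:subm_func}. I expect no difficulty in the outer reduction, since discarding the non-negative slack terms $f(D)$ and $f(V)$ is exactly what yields the clean constant $1/4$.
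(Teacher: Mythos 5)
Your proof is correct: the averaging lemma and its inductive step (with the key inequality being exactly Definition~\ref{def:subm_func} applied to the pair $\{u\}$ and $U\setminus\{u\}$) are sound, and the two-fold peeling of $R$ into its restrictions to $C$ and $V\setminus C$ legitimately yields $\expect{f(R)} \geq \frac{1}{4}f(C)$. The paper itself states this theorem as an imported result from Feige et al.\ without proof, and your argument essentially reconstructs the original one from that source (their sampling lemma applied once to each side of the partition by an optimal set), so there is nothing further to compare.
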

We exploit this result to obtain an improved upper-bound on the run time. Intuitively, the initial solution sampled by the \ea yields a constant-factor approximation guarantee at least with constant probability. We can use this result to prove the following theorem.
\begin{theorem} 
\label{thm:maxCut2}
The \ea with mutation $\clmut$ is a $(1/3 - \varepsilon/n)$-approximation algorithm for Problem \eqref{probl_1} after $\bigo{\frac{1}{\varepsilon} n^3 + n^{\beta}}$ fitness evaluations, at least w.c.p.
\end{theorem}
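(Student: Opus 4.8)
The plan is to improve over Theorem~\ref{thm:maxCut} by observing that its extra $\log\frac{n}{\varepsilon}$ factor arises solely from the pessimistically small initial value $\varepsilon\,\opt/n$ used to seed the multiplicative-increase argument of Phase~1. Theorem~\ref{thm:vondrak_new} tells us that a uniformly random set already satisfies $\expect{f(R)}\geq \opt/4$, so I would first show that the \ea's initial solution $x_0$ is a constant-factor approximation with constant probability, and then rerun the two-phase analysis from this much better starting point.

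First I would normalise and set $X = f(x_0)/\opt \in [0,1]$, so that $\expect{X}\geq 1/4$ by Theorem~\ref{thm:vondrak_new}. Applying the variant of Markov's inequality in Lemma~\ref{lemma:BD} with $c=1/8$ (which satisfies $c\leq \expect{X}$) gives $\pr{X\leq 1/8}\leq \frac{1-1/4}{1-1/8} = \frac{6}{7}$, hence $\pr{f(x_0) > \opt/8}\geq \frac{1}{7}$. Thus with constant probability the initial solution already has value $\Omega(\opt)$, and by elitism this lower bound on the current fitness is preserved throughout the run.

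Next I would condition on this event and repeat Phase~1 of the proof of Theorem~\ref{thm:maxCut}, but replace the seed value $g_{f,\varepsilon}(x_0)\geq \varepsilon\,\opt/n$ by $g_{f,\varepsilon}(x_0)\geq f(x_0) = \Omega(\opt)$. The number of favorable single bit-flips needed to reach a $(1+\varepsilon/n^2)$-local optimum is then obtained by solving $(1+\varepsilon/n^2)^k\,\Omega(\opt)\leq \opt(1+\varepsilon/n)$, i.e. $(1+\varepsilon/n^2)^k = \bigo{1}$, which yields $k=\bigo{n^2/\varepsilon}$ rather than $\bigo{\frac{1}{\varepsilon}n^2\log\frac{n}{\varepsilon}}$: the logarithmic factor disappears precisely because the ratio between the optimal and the initial value of $g_{f,\varepsilon}$ is now a constant instead of $\Theta(n/\varepsilon)$. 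Since each favorable bit-flip still occurs with probability $\Omega(1/n)$ and so takes expected $\bigo{n}$ iterations, Phase~1 costs $\bigo{\frac{1}{\varepsilon}n^3}$ in expectation. Phase~2 is unchanged: a single $n$-bit flip, occurring with probability $\Omega(n^{-\beta})$ by \eqref{eq:hamming_distance}, brings the expected cost of comparing the local optimum with its complement (Theorem~\ref{thm:feige}) to $\bigo{n^\beta}$.

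Finally I would convert the expected bound into a with-constant-probability statement on the number of fitness evaluations. Conditioned on the good start, the total expected run time is $\bigo{\frac{1}{\varepsilon}n^3+n^\beta}$, so a further application of Markov's inequality shows the run time stays within a constant factor of this with probability at least $1/2$; intersecting with the probability-$1/7$ event that $x_0$ is good gives the claimed bound with constant probability. The only point requiring care is ensuring the multiplicative-increase counting of Phase~1 remains valid from the elevated starting value; this follows from elitism, which guarantees that $g_{f,\varepsilon}$ of the maintained solution never drops below $f(x_0)=\Omega(\opt)$, and from checking that Lemma~\ref{lemma:BD} is invoked only in the regime $c\leq\expect{X}$ where it holds. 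I expect this bookkeeping — rather than any deep new idea — to be the main obstacle, since the real content is the single observation that a random seed already lies within a constant factor of the optimum.
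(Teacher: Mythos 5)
Your proposal matches the paper's own proof essentially step for step: you invoke Theorem~\ref{thm:vondrak_new} together with Lemma~\ref{lemma:BD} to show $f(x_0)=\Omega(\opt)$ with constant probability (your choice $c=1/8$ is the paper's argument with $\delta=2$), condition on this event so the multiplicative-increase count in Phase~1 becomes $\bigo{n^2/\varepsilon}$ without the $\log\frac{n}{\varepsilon}$ factor, and keep Phase~2 unchanged at $\bigo{n^{\beta}}$. Your explicit final Markov step turning the conditional expected run time into a with-constant-probability bound is a small piece of bookkeeping the paper leaves implicit, but otherwise the argument is the same.
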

\begin{proof}
This proof is similar to that of Theorem \ref{thm:maxCut}. We denote with $x_t$ a solution reached by the \ea at time step $t$. We first prove that the definition of submodularity implies that, with high probability the initial solution $x_0$ yields a constant-factor approximation guarantee. We then perform a run time analysis as in Theorem \ref{thm:maxCut}, by counting the expected time until the fittest individual is chosen for selection, and a local improvement of at least $(1 + \epsilon /n^2)$ is made, assuming that the initial solution yields a constant-factor approximation guarantee.

Denote with $R\subseteq V$ a set chosen uniformly at random and fix a constant $\delta > 1$. We combine Theorem \ref{thm:vondrak_new} with Lemma \ref{lemma:BD}, by choosing $X= f(R)/\opt$ and obtain,
\[
\pr{f(R) \leq \frac{1}{4\delta} \opt} = \pr{X \leq \frac{1}{4\delta}}\leq \frac{1 - 1/4}{1 - 1/4\delta} = \frac{3 \delta}{4 \delta - 1},
\]
where the last inequality by applying Theorem \ref{thm:vondrak_new} and linearity of expectation to the r.v. $X = f(R)/\opt$, to obtain that $\expect{X} \geq 1/4$. We have,
\[
\pr{x_0 > \frac{1}{4 \delta} \opt}\geq 1 - \pr{f(R) \leq \frac{1}{4 \delta} \opt} \geq 1 - \frac{3 \delta}{4 \delta - 1}.
\]
In the following, for a fixed constant $\delta > 1$, we perform the run time analysis as in Theorem \ref{thm:maxCut} conditional on the event $\mathcal{A} = \{x_0 > \opt / 4 \delta \}$, which occurs at least w.c.p.

Again, we divide the run time in two phases. During Phase 1, the \ea finds a $(1 + \varepsilon /n^2)$-local optimum of $f$. During Phase 2 the algorithm finds a $(1/3 - \varepsilon/n)$-approximation of the global optimum of $f$ using the heavy-tailed mutation.

\emph{(Phase 1)} For any solution $x_t$ it is always possible to make an improvement of $(1 + \varepsilon/n^2)f(x_t)$ on the fitness in the next iteration, by adding or removing a single vertex -- the favorable bit-flip, unless $x_t$ is already a $(1 + \varepsilon /n^2)$-local optimum. Again, we give an upper-bound on the number of favorable bit-flips $k$ to reach a $(1 + \varepsilon /n^2)$-local optimum, by solving the following equation
\[
\left ( 1 + \frac{\varepsilon}{n^2} \right )^k \frac{\opt}{4 \delta} \leq \opt  \Longleftrightarrow \left ( 1 + \frac{\varepsilon}{n^2} \right )^k \leq 4 \delta,
\]
from which it follows that the \ea with mutation $\clmut$ reaches a $(1 + \varepsilon/n^2)$-local maximum after at most $k = \bigo{\frac{1}{\varepsilon}n^2}$ favorable moves. Since the probability of performing a single chosen bit-flip is at least $\inverse{H_n^{\beta}}n^{-1} = \Omega(1/n)$, then the expected waiting time for a favorable bit-flip to occur is $\bigo{n}$, we can upper-bound the expected run time in this initial phase as $\bigo{\frac{1}{\varepsilon} n^3}$.

\emph{(Phase 2)} We conclude applying the heavy-tailed mutation step: If the solution found in Phase $1$ does not yield the desired approximation ratio, a $n$-bit flip is sufficient to find a $(1/3 - \varepsilon /n)$-approximation of the global optimum of $f$. The probability of this event to occur is at least $\inverse{H_n^{\beta}}n^{-\beta} = \Omega (n^{-\beta})$ by \eqref{eq:hamming_distance}. After an additional phase of expected $\bigo{ n^{\beta}}$ fitness evaluations the \ea with mutation $\clmut$ performs an $n$-nit flip, thus reaching the desired approximation ratio.
\end{proof}
%
% ------------------------------ Symmetric Submodular Functions under a Matroid Constraint
%
\section{Symmetric Submodular Functions under a Matroid Constraint.}\label{sec:submodular_const}

In this section we consider the problem of maximizing a non-negative submodular function $f:2^V \rightarrow \mathbb{R}_{\geq 0}$ under a single matroid constraint $\mathcal{M} = (V, \ind)$. More formally, we study the problem
\begin{equation}
\label{probl_2}
\mbox{argmax}_{C \in \ind }f(C).
\end{equation}
We denote with $\opt$ any solution of Problem \eqref{probl_2}, and we denote with $n$ the size of $V$. Note that this definition of $\opt$ differs from that of Section \ref{sec:submodular}.

We approach this problem, by maximizing the following fitness function
\begin{equation}
z_f(C) = 
\left \{
\begin{array}{ll}
f(C) & \mbox{if } C \in \ind; \\
r(C) - \absl{C} & \mbox{otherwise};
\end{array}
\right .
\label{eq:fitness_function}
\end{equation}
with $r$ the rank function as in Definition \ref{def:rank_function}. If a solution $C$ is unfeasible, then $z_f(C)$ returns a negative number, whereas if $C$ is feasible, then $z_f(C)$ outputs a non-negative number.

When studying additional constraints on the solution space the problem becomes more involved, so we require a different notion of local optimality.
\begin{definition}
\label{def:local_optima2}
Let $f\colon 2^V\rightarrow \mathbb{R}_{\geq 0}$ be a submodular function, let $\mathcal{M} = (V, \ind)$ be a matroid and let $\alpha>0$. A set $\lm \in \ind$ is a $(1 + \alpha)$-local optimum if the following hold. 
\begin{itemize}
\item[$\bullet$] $(1 + \alpha)f(\lm) \geq f(\lm \setminus \{u\})$ for all $u \in \lm$;
\item[$\bullet$] $(1 + \alpha)f(\lm) \geq f(\lm \cup \{ v\})$ for all $v \in V \setminus \lm$ s.t. $S\cup \{v\} \in \ind$;
\item[$\bullet$] $(1 + \alpha)f(\lm) \geq f((\lm  \setminus \{u\} )\cup \{ v\})$ for all $u \in \lm$ and $v \in V \setminus \lm$ s.t. $(S\setminus \{u\})\cup \{v\} \in \ind$.
\end{itemize}
\end{definition}
We prove that, in the case of a \emph{symmetric} submodular function, a $(1 + \alpha)$-local optimum as in Definition \ref{def:local_optima2} yields a constant-factor approximation ratio. To this end, We make use of the following well-known result.
\begin{theorem}[Theorem 1 in Lee et al. \cite{Lee:2009:NSM:1536414.1536459}]
\label{thm:lee_et_al}
Let $\mathcal{M} =(V, \ind)$ be a matroid and $I, J \in \ind$ be two independent sets. Then there is a mapping $\pi : J\setminus I \rightarrow (I\setminus J)\cup \{\emptyset \}$ such that
\begin{enumerate}
\item[$\bullet$] $(I\setminus \{\pi(b)\}) \cup \{b\} \in \ind$ for all $b\in J\setminus I$;
\item[$\bullet$] $\absl{\pi^{-1}(e)} \leq 1$ for all $e \in I \setminus J$.
\end{enumerate}
\end{theorem}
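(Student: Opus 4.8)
The plan is to recast the existence of $\pi$ as a bipartite matching problem and settle it with Hall's marriage theorem, verifying Hall's condition by a rank argument. First I would split $J\setminus I$ into two groups. Let $B=\{b\in J\setminus I : I\cup\{b\}\notin\ind\}$ be the elements whose addition to $I$ destroys independence, and let the complementary group consist of those $b$ with $I\cup\{b\}\in\ind$. Every element of the second group can simply be sent to $\emptyset$: this satisfies the first bullet, since $(I\setminus\{\emptyset\})\cup\{b\}=I\cup\{b\}\in\ind$, and it never contributes to any preimage $\pi^{-1}(e)$ with $e\in I\setminus J$, so it is harmless for the second bullet. The whole difficulty is therefore to produce an injective map $\pi:B\to I\setminus J$ with $(I\setminus\{\pi(b)\})\cup\{b\}\in\ind$.

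For the matching I would use fundamental circuits. For $b\in B$ the set $I\cup\{b\}$ is dependent and contains a unique circuit $C(b)$ with $b\in C(b)\subseteq I\cup\{b\}$; for every $e\in C(b)\setminus\{b\}$ the swap $(I\setminus\{e\})\cup\{b\}$ is independent, and these are precisely the admissible targets of $b$ inside $I$. A short observation shows $C(b)$ must meet $I\setminus J$: otherwise $C(b)\setminus\{b\}\subseteq I\cap J$, whence $C(b)\subseteq J$, contradicting independence of $J$. Hence each $b\in B$ has at least one admissible target in $I\setminus J$, and I would define a bipartite graph on $B$ and $I\setminus J$ with $b\sim e$ iff $e\in C(b)\setminus\{b\}$, equivalently iff $(I\setminus\{e\})\cup\{b\}\in\ind$.

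The hard part will be verifying Hall's condition for this graph, and this is where the matroid rank function does the work. Fix $S\subseteq B$ and let $T=N(S)\subseteq I\setminus J$ be its neighbourhood. For each $b\in S$ the members of $C(b)\setminus\{b\}$ lie in $I$, and those in $I\setminus J$ are by definition neighbours of $b$, hence contained in $T$; thus $C(b)\setminus\{b\}\subseteq(I\cap J)\cup T$, so $b$ is spanned by $(I\cap J)\cup T$. Consequently adding $S$ does not raise the rank: $r\big((I\cap J)\cup T\cup S\big)=r\big((I\cap J)\cup T\big)\le\absl{I\cap J}+\absl{T}$. On the other hand $(I\cap J)\cup S\subseteq J$ is independent, and since $S$ is disjoint from $I$ we have $r\big((I\cap J)\cup T\cup S\big)\ge\absl{(I\cap J)\cup S}=\absl{I\cap J}+\absl{S}$. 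Combining the two bounds gives $\absl{S}\le\absl{T}=\absl{N(S)}$, which is exactly Hall's condition.

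With Hall's condition in hand, Hall's marriage theorem yields an injection $\pi|_B:B\to I\setminus J$ respecting the adjacency, hence satisfying the first bullet; extending it by $\pi(b)=\emptyset$ on the remaining elements of $J\setminus I$ completes the definition, and injectivity on $B$ together with the fact that $\emptyset$ absorbs the rest guarantees $\absl{\pi^{-1}(e)}\le 1$ for all $e\in I\setminus J$. I expect the rank computation of the third paragraph to be the only genuinely non-routine step; everything else is bookkeeping around the reduction and the fundamental-circuit description of admissible swaps.
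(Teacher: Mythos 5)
Your proof is correct, but there is nothing in the paper to compare it against: the paper states this result as an imported citation (Theorem~1 of Lee et al.\ \cite{Lee:2009:NSM:1536414.1536459}) and gives no proof of its own. Checking your argument on its merits: the split of $J\setminus I$ into $B=\{b: I\cup\{b\}\notin\ind\}$ and its complement is sound (elements sent to $\emptyset$ satisfy the first bullet trivially and never occur as preimages of any $e\in I\setminus J$); the fundamental-circuit characterization of admissible swaps is the standard fact that for $b\in B$ the unique circuit $C(b)\subseteq I\cup\{b\}$ satisfies $(I\setminus\{e\})\cup\{b\}\in\ind$ exactly when $e\in C(b)\setminus\{b\}$; your observation that $C(b)$ must meet $I\setminus J$ (else $C(b)\subseteq J$, contradicting $J\in\ind$) also rules out the degenerate case $C(b)=\{b\}$; and the Hall verification is the genuinely substantive step and is right: each $b\in S$ lies in $\mathrm{cl}\bigl((I\cap J)\cup N(S)\bigr)$ since $C(b)\setminus\{b\}\subseteq (I\cap J)\cup N(S)$, so $r\bigl((I\cap J)\cup N(S)\cup S\bigr)=r\bigl((I\cap J)\cup N(S)\bigr)\leq \absl{I\cap J}+\absl{N(S)}$, while $(I\cap J)\cup S\subseteq J$ is independent and disjointness gives the lower bound $\absl{I\cap J}+\absl{S}$, yielding $\absl{S}\leq\absl{N(S)}$. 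For context, the proof in Lee et al.\ takes a different route: they pad $I$ and $J$ to independent sets of equal size and invoke the classical base-exchange perfect-matching theorem (due to Brualdi; see also Schrijver, Corollary~39.12a), whereas your argument proves the matching directly from Hall's theorem and the rank function. What your version buys is self-containedness -- it needs only fundamental circuits, closure, and Hall -- at the cost of redoing work that the base-exchange theorem packages; the citation route is shorter but leans on a nontrivial external result.
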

A mild revision of Lemma 1 and Theorem 3 in Lee et al. \cite{Lee:2009:NSM:1536414.1536459}, the following lemma holds.
\begin{lemma}
\label{lemma:matroid_local_optimality}
Consider a non-negative symmetric submodular function $f:2^V \rightarrow \mathbb{R}_{\geq 0}$, a matroid $\mathcal{M} =(V, \ind)$ and let $S$ be a $(1 + \varepsilon/n^2)$-local optimum as in Definition \ref{def:local_optima2}. Then $S$ is a $(1/3 - \epsilon/n)$-approximation for Problem \eqref{probl_2}.
\end{lemma}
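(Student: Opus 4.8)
The plan is to derive the $(1/3-\varepsilon/n)$ guarantee from two inequalities and then combine them. Write $\alpha=\varepsilon/n^2$, let $C:=\opt$ be an optimal solution of Problem~\eqref{probl_2}, and recall that $S$ is a $(1+\alpha)$-local optimum in the sense of Definition~\ref{def:local_optima2}; the claim amounts to $f(S)\geq(1/3-\varepsilon/n)f(C)$. I would obtain this from an \emph{exchange inequality}
\[
f(S\cup C)+f(S\cap C)\leq \left(2+\tfrac{\varepsilon}{n}\right)f(S)
\]
together with a \emph{symmetry inequality}
\[
f(C)\leq f(S)+f(S\cup C)+f(S\cap C),
\]
since the two yield $f(C)\leq(3+\varepsilon/n)f(S)$ and hence $f(S)\geq\frac{1}{3+\varepsilon/n}f(C)\geq(1/3-\varepsilon/n)f(C)$.

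For the exchange inequality I would invoke Theorem~\ref{thm:lee_et_al} with $I=S$ and $J=C$ to obtain the mapping $\pi\colon C\setminus S\to(S\setminus C)\cup\{\emptyset\}$. For every $b\in C\setminus S$ the set reached from $S$ by the move prescribed by $\pi$ is independent: if $\pi(b)=\emptyset$ then $S\cup\{b\}\in\ind$ and the addition clause of Definition~\ref{def:local_optima2} applies, and otherwise $(S\setminus\{\pi(b)\})\cup\{b\}\in\ind$ and the swap clause applies. Combining each clause with submodularity---using that the marginal gain of $b$ is no smaller with respect to the smaller base $S\setminus\{\pi(b)\}$ than with respect to $S$---gives the per-element bound $[f(S\cup\{b\})-f(S)]-[f(S)-f(S\setminus\{\pi(b)\})]\leq\alpha f(S)$, reading the removal term as $0$ when $\pi(b)=\emptyset$. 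Summing over $b\in C\setminus S$ and using $\absl{\pi^{-1}(e)}\leq1$, so that the removal terms range over \emph{distinct} $e\in S\setminus C$, I would pass to set-level quantities via the standard telescoping consequences of submodularity, $f(S\cup C)-f(S)\leq\sum_{b}[f(S\cup\{b\})-f(S)]$ and $\sum_{e\in S\setminus C}[f(S)-f(S\setminus\{e\})]\leq f(S)-f(S\cap C)$, while the deletion clause of local optimality controls the elements of $S\setminus C$ lying outside the image of $\pi$. The slack accumulated over at most $n$ moves is $n\alpha f(S)=\varepsilon f(S)/n$, which yields the exchange inequality.

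The symmetry inequality is where I expect the cleanest use of the hypotheses. Submodularity alone gives $f(C)\leq f(S\cap C)+f(C\setminus S)$, so it remains to bound $f(C\setminus S)$. For this I would partition $V$ into the four regions $S\cap C$, $S\setminus C$, $C\setminus S$ and $V\setminus(S\cup C)$, apply submodularity to the disjoint sets $S$ and $V\setminus(S\cup C)$, and invoke symmetry in the two identities $f(V\setminus(S\cup C))=f(S\cup C)$ and $f(V\setminus(C\setminus S))=f(C\setminus S)$. This gives $f(C\setminus S)\leq f(S)+f(S\cup C)$ after dropping the non-negative term $f(\emptyset)$, and hence the desired $f(C)\leq f(S)+f(S\cup C)+f(S\cap C)$.

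The main obstacle is the bookkeeping in the exchange step: one must keep the direction of the submodular marginal inequalities straight, sum the removal contributions only over the image of $\pi$ while absorbing the remaining elements of $S\setminus C$ through the deletion clause, and verify that the per-move slack $\alpha f(S)$ accumulates to only $n\alpha f(S)=\varepsilon f(S)/n$. By contrast, once the four-region decomposition is in place the symmetry inequality is short, and the final combination $f(C)\leq(3+\varepsilon/n)f(S)$ together with $1/(3+\varepsilon/n)\geq 1/3-\varepsilon/n$ closes the argument.
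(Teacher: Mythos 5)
Your proposal is correct and takes essentially the same route as the paper's proof: both invoke the exchange mapping of Theorem~\ref{thm:lee_et_al} with $I=S$, $J=C$, combine the clauses of Definition~\ref{def:local_optima2} with submodular telescoping to obtain $f(S\cup C)+f(S\cap C)\leq\left(2+\frac{2\varepsilon}{n}\right)f(S)$ (your bookkeeping gives the marginally tighter $2+\varepsilon/n$), and then use symmetry together with two further applications of submodularity to conclude $f(C)\leq 3\left(1+\frac{\varepsilon}{n}\right)f(S)$ with $C=\opt$. If anything, you are more explicit than the paper on one point: you absorb the elements of $S\setminus C$ lying outside the image of $\pi$ via the deletion clause of local optimality, a detail the paper's telescoping step passes over silently.
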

\begin{proof}
Fix a constant $\varepsilon > 0$ and a set $C\in \ind$. Consider a mapping $\pi : C \setminus S \rightarrow (S \setminus C)\cup\{\emptyset\}$ as in Theorem \ref{thm:lee_et_al}. Since $S$ is a $(1 + \varepsilon/n^2)$-local optimum it holds
\begin{equation}
\left (1 + \frac{\varepsilon}{n^2} \right ) f(S) \geq f((S\setminus \{\pi(b)\}) \cup b)\label{eq:real_lemma2};
\end{equation}
for all $b \in C\setminus S$. Thus, it holds
\begin{align}
f((S & \cup \{b\}) - f(S) \nonumber \\
& \leq f((S\setminus \{\pi(b)\})\cup \{b\}) - f(S\setminus \{\pi(b)\})\nonumber \\
& \leq \left (1 + \frac{\varepsilon}{n^2} \right ) f(S) - f(S\setminus \{\pi(b)\}),\nonumber
\end{align}
where the first inequality follows from \eqref{def:submod_1}, and the second one follows from \eqref{eq:real_lemma2}. Summing these inequalities for each $b \in C\setminus S$ and using submodularity as in \eqref{def:submod_1} we obtain,
\begin{align}
f(S & \cup C) - f(S) \nonumber\\
& \leq  \sum_{b \in C\setminus S}\left [f(S\cup \{b\}) - f(S) \right ] \nonumber\\
& \leq \sum_{b \in C\setminus S} \left [ \left (1 + \frac{\varepsilon}{n^2} \right ) f(S) - f(S\setminus \{\pi(b)\}) \right ]. \nonumber
\end{align}
Consider a given order of the elements in $b\in C\setminus S$, i.e. $C\setminus S = \{b_1, \dots, b_k\}$. Then it holds
\begin{align*}
\sum_{b \in C\setminus S} & \left [ \left (1 + \frac{\varepsilon}{n^2} \right ) f(S)  - f(S\setminus \{\pi(b)\})  \right ] = \sum_{j = 1}^k [f(S) - f(S\setminus \{\pi(b_j)\})] + k\frac{\varepsilon}{n^2} f(S) \\
& \leq \sum_{j = 2}^k f\left ( (S \cap C) \bigcup_{\ell = 1}^j \{\pi (b_\ell)\}  \right ) - \sum_{j = 2}^k  f\left ( (S \cap C) \bigcup_{\ell = 1}^{j-1} \{\pi (b_\ell)\}  \right )\\
& + f((S\cap C) \cup \{\pi (b_1)\}) - f(S\cap C) + k \frac{\varepsilon}{n^2}f(S) \leq \left (1 + \frac{\varepsilon}{n} \right )f(S) - f(S\cap C)
\end{align*}
where the first inequality follows from \eqref{def:submod_1} and the second inequality follows by taking the telescopic sum together with the fact that $k \leq n$. Thus, it follows that
\[
2 \left (1 + \frac{\varepsilon}{n} \right ) f(S) \geq f(S\cup C) + f(S \cap C),
\]
Since $f$ is symmetric, $f(S) = f(V \setminus S)$ and we have that,
\begin{align}
3 \left (1 + \frac{\varepsilon}{n} \right ) f(S) & \geq f(\overline{S}) + f(S\cup C) + f(S \cap C)\nonumber \\
& \geq f(C\setminus S) + f(C \cap S) \geq f(C).\nonumber 
\end{align}
The claim follows by choosing $C = \opt$.
\end{proof}
We use Lemma \ref{lemma:matroid_local_optimality} to perform a run time analysis of the \ea. We consider the case of the $\clmut$ mutation, although our proof easily extends to the standard uniform mutation and $\plow$. We experimentally compare these operators in Section \ref{sec:symmetric_mutual_information}. We perform the analysis by estimating the expected run time until a $(1 + \varepsilon/n^2)$-local optimum is reached, and apply Lemma \ref{lemma:matroid_local_optimality} to obtain the desired approximation guarantee. Our analysis yields an improved upper-bound on the run time over that of Friedrich and Neumann \cite{DBLP:journals/ec/FriedrichN15}. The following theorem holds.
\begin{theorem}
\label{thm:general_bound_constrained}
The \ea with mutation $\clmut$ is a $(1/3 - \varepsilon/n)$-approximation algorithm for Problem \eqref{probl_2}. Its expected run time is $\bigo{\frac{1}{\epsilon}n^4 \log \frac{n}{\varepsilon}}$.
\end{theorem}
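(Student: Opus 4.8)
The plan is to mirror the two-phase argument of Theorem~\ref{thm:maxCut}, but to replace its complement step (Phase~2 there) by a direct appeal to Lemma~\ref{lemma:matroid_local_optimality}, which already certifies that a $(1 + \varepsilon/n^2)$-local optimum in the sense of Definition~\ref{def:local_optima2} is \emph{itself} a $(1/3 - \varepsilon/n)$-approximation. As in the unconstrained case I would not work with $f$ directly but with the shifted potential $g_{f,\varepsilon}(U) = f(U) + \varepsilon\,\opt/n$ of Lemma~\ref{lemma:potential_function}: it is submodular, bounded below by $\varepsilon\,\opt/n > 0$, and it inherits the symmetry of $f$, since $g_{f,\varepsilon}(V\setminus U) = f(V\setminus U) + \varepsilon\opt/n = g_{f,\varepsilon}(U)$ for every $U \subseteq V$, so Lemma~\ref{lemma:matroid_local_optimality} applies to $g_{f,\varepsilon}$ verbatim. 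A $(1/3 - \varepsilon/n)$-approximate local optimum for $g_{f,\varepsilon}$ is then a $(1/3 - 2\varepsilon/n)$-approximation for $f$ by Lemma~\ref{lemma:potential_function}(3) — whose argument is a pure approximation-ratio translation and is oblivious to the constraint — and rescaling $\varepsilon$ recovers the stated ratio.

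First I would dispose of feasibility. The fitness $z_f$ in~\eqref{eq:fitness_function} assigns the negative value $r(C) - \absl{C}$ to every infeasible $C$ and a non-negative value to every feasible one, so once the \ea holds a feasible set it never again accepts an infeasible offspring. Starting from an infeasible $x_0$, fix a maximal independent subset $B \subseteq x_0$; any $u \in x_0 \setminus B$ is redundant, and the single bit-flip removing it preserves the rank while decreasing $\absl{C}$ by one, strictly increasing $z_f$. Each such improving single-bit flip is performed by $\clmut$ with probability $\Omega(1/n)$ and at most $n$ of them are needed, so a feasible solution is reached within expected $\bigo{n^2}$ steps, which is absorbed into the final bound.

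For the main phase I would use the multiplicative-increase method on $g_{f,\varepsilon}$. So long as the current feasible set $S$ is not a $(1+\varepsilon/n^2)$-local optimum, the negation of one of the three conditions of Definition~\ref{def:local_optima2} exhibits a \emph{favorable move} — a single deletion, a single feasible addition, or a feasible swap $(S\setminus\{u\})\cup\{v\}$ — whose acceptance multiplies $g_{f,\varepsilon}(S)$ by at least $(1+\varepsilon/n^2)$. Since $g_{f,\varepsilon}(x_0) \geq \varepsilon\opt/n$ and the value never exceeds $\opt(1+\varepsilon/n)$, solving $(1+\varepsilon/n^2)^k \le n/\varepsilon + 1$ bounds the number of favorable moves by $k = \bigo{\tfrac{1}{\varepsilon}n^2 \log \tfrac{n}{\varepsilon}}$, exactly as in Phase~1 of Theorem~\ref{thm:maxCut}.

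The crux, and the one genuinely new ingredient relative to the unconstrained analysis, is the waiting time per favorable move, because local optimality now involves swaps, i.e. $2$-bit flips, and not only single bit-flips. By~\eqref{eq:hamming_distance} the operator $\clmut$ realises a prescribed swap with probability $\binom{n}{2}^{-1}\inverse{H_n^\beta}2^{-\beta} = \Omega(1/n^2)$, so the expected wait for a single favorable move is $\bigo{n^2}$, a factor $n$ worse than the $\bigo{n}$ single-bit-flip wait of the unconstrained proof, while additions and deletions are only cheaper. Multiplying the $\bigo{\tfrac1\varepsilon n^2 \log\tfrac n\varepsilon}$ favorable moves by this $\bigo{n^2}$ wait yields the claimed $\bigo{\tfrac1\varepsilon n^4 \log \tfrac n\varepsilon}$, and Lemma~\ref{lemma:matroid_local_optimality} converts the resulting local optimum into the approximation guarantee. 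The main thing to be careful about is that infeasible offspring produced by a mutation are silently rejected, so the favorable swap really does land inside $\ind$; this is guaranteed because Definition~\ref{def:local_optima2} quantifies only over swaps that keep the set independent.
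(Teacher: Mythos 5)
Your proposal is correct and takes essentially the same route as the paper's proof: the same two-phase structure with the fitness $z_f$ enforcing feasibility, the same potential $g_{f,\varepsilon}$ with the multiplicative-increase argument giving $\bigo{\frac{1}{\varepsilon}n^2 \log \frac{n}{\varepsilon}}$ favorable moves, the same $\Omega(1/n^2)$ probability per move to account for swaps (hence the extra factor $n$ over the unconstrained case), and the same conversion of the resulting $(1+\varepsilon/n^2)$-local optimum via Lemma~\ref{lemma:matroid_local_optimality} and Lemma~\ref{lemma:potential_function}(3). The only deviations are minor: the paper dispatches Phase~1 with the multiplicative drift theorem on $X_t = \absl{x_t} - r(x_t)$ in $\bigo{n \log n}$ where you use a direct redundant-element argument in $\bigo{n^2}$ (both absorbed by the final bound), and your explicit verification that $g_{f,\varepsilon}$ inherits the symmetry of $f$ --- needed for Lemma~\ref{lemma:matroid_local_optimality} to apply to the potential --- is a point the paper leaves implicit.
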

\begin{proof}

We perform the analysis assuming that a fitness function as in \eqref{eq:fitness_function} is used. We divide the run time in two phases. During (Phase 1) the \ea finds a feasible solution, whereas in (Phase 2) it finds a $(1 + \varepsilon /n^2)$-local optimum, given that an independent set has been found.

\emph{(Phase 1)} Assuming that the initial solution is not an independent set, then the \ea maximizes the function $r(C) - \absl{C}$, until a feasible solution is found. This is equivalent to minimizing the function $\absl{C} - r(C)$. We estimate the run time using the multiplicative drift theorem (Theorem \ref{thm:multiplicative_drift}). Denote with $x_t$ a solution found by the \ea after $t$ steps, consider the Markov chain $X_t = \absl{x_t} - r(x_t)$ and consider the first hitting time $T= \min \{ t \colon X_t = 0\}$. Then it holds $X_t \in \{0\}\cup [1, n]$. Moreover, since the probability of removing a single chosen bit-flip from the current solution is $1/en$, we have,
$
\expect{X_t - X_{t + 1} \mid X_t} \geq \frac{X_t}{en}.
$
Theorem \ref{thm:multiplicative_drift} now yields, 
$
\expect{T} \leq 2 e n \log (1 + n) .
$
We conclude that we can upper-bound the run time in this initial phase as $\bigo{n \log n}$.

\emph{(Phase 2)} We estimate the run time in this phase with the multiplicative increase method. Assuming that a feasible solution is reached, then all subsequent solutions are feasible, since $z_f(C) \geq 0$ for all feasible solutions and $z_f(C) < 0$ for all infeasible solutions.

To estimate the run time in this phase we do not perform the analysis on $f$ directly but we consider the potential function $g_{f, \varepsilon}$ from Lemma~\ref{lemma:potential_function} (recall that in this case \opt is not the global optimum of $f$, but the highest $f$-value among all feasible solutions). We prove that for all $\varepsilon > 0$, the \ea with mutation $\clmut$ finds a $(1/3 - \varepsilon/n)$-approximation of 
$
g_{f, \varepsilon}(S) = f(S) + \frac{\opt}{\varepsilon}
$,
within expected $\bigo{\frac{1}{\varepsilon}n^4 \log \frac{n}{\varepsilon}}$ fitness evaluations. We apply Lemma \ref{lemma:potential_function}(3) and conclude that the \ea with mutation $\clmut$ finds a $(1/3 - 2 \varepsilon/n)$-approximation of $f$ within $ \bigo{\frac{1}{\varepsilon}n^3 \log \frac{n}{\varepsilon}}$ fitness evaluations. 

Denote with $y_t$ the solution found by the \ea at time step $t + \ell$, for all $t \geq 0$, with $\ell$ the number of steps in Phase 1. In other words, $y_0$ is the first feasible solution found by the \ea, and $y_t$ is the solution found after additional $t$ steps. Again, the solutions $y_t$ are independent sets for all $t \geq 0$. For any solution $y_t$ it is always possible to make an improvement of $(1 + \varepsilon/n^2)g_{f, \varepsilon}(y_t)$ on the fitness in the next iteration, by adding or removing a single vertex, or by swapping two bits, unless $y_t$ is already a $(1 + \varepsilon /n)$-local optimum. Again, we refer to any single bit-flip or swap that yields such an improvement of a fitness as favorable move. We give an upper-bound on the number of favorable moves $k$ to reach a $(1 + \varepsilon /n)$-local optimum, by solving the following equation
\[
\left ( 1 + \frac{\varepsilon}{n^2} \right )^k \varepsilon \frac{\opt}{n} \leq \opt + \varepsilon \frac{\opt}{n} \Leftrightarrow \left ( 1 + \frac{\varepsilon}{n^2} \right )^k \leq \frac{n}{\varepsilon} + 1,
\]
where we have used that that for the initial solution $y_0$, $g_{f, \varepsilon} (y_0) \geq \varepsilon \opt / n$ (see Lemma \ref{lemma:potential_function}(2)). From solving the inequality it follows that the \ea reaches a $(1 + \varepsilon/n)$-local maximum after at most $k = \bigo{\frac{1}{\varepsilon}n^2 \log \frac{n}{\varepsilon}}$ favorable moves. Since the probability of performing a single chosen bit-flip or a swap is at least $H_n^{-\beta} 2^{-\beta}n^{-2}$, then the expected waiting time for a favorable bit-flip to occur is at most $ \bigo{n^2}$, hence we can upper-bound the expected run time in Phase 2 as $\bigo{\frac{1}{\varepsilon} n^4\log \frac{n}{\varepsilon}}$. 
\end{proof}

\section{Experiments}
\subsection{The Maximum Directed Cut problem.}
\label{sec:max_di_cut}

Given a directed graph $G =(V, E)$, we consider the problem of finding a subset $U \subseteq V$ of nodes such that the sum of the outer edges of $U$ is maximal. This problem is the maximum directed cut problem (\maxdicut) and is a known to be $\mathsf{NP}$-complete. 

For each subset of nodes $U \subseteq V$, consider the set $\Delta(U) = \{ (e_1, e_2) \in E \colon e_1 \in U \mbox{ and } e_2 \notin U  \}$ of all edges leaving $U$. We define the cut function $f\colon 2^V \longrightarrow \mathbb{R}_{\geq 0}$ as 
\begin{equation}
\label{eq:maxdicut_fitness}
f(U) = \absl{\Delta(U)}.
\end{equation}
The \maxdicut can be approached by maximizing the cut function as in \eqref{eq:maxdicut_fitness}. Note that this function is non-negative. Moreover, it is always submodular and, in general, non-monotone (see e.g. Feige et al. \cite{DBLP:journals/siamcomp/FeigeMV11} and Friedrich et al. \cite{DBLP:journals/ec/FriedrichN15}). Hence, this approach to the \maxdicut can be formalized as in Problem \eqref{probl_1} in Section \ref{sec:submodular}. 

We select the 123 large instances used by Wagner~et~al.~\cite{Wagner2017fastvctuning}; the number of vertices ranges from about 379 to over 6.6 million, and the number of edges ranges from 914 to over 56 million. All 123 instances are available online~\cite{urlVC}. 

The instances come from a wide range of origins. For example, there are 14 collaboration networks (ca-*, from various sources such as Citeseer, DBLP, and also Hollywood productions), five infrastructure networks (inf-*), six interaction networks (ia-*, e.g. about email exchange), 21 general social networks (soc-*, e.g., Flickr, LastFM, Twitter, Youtube), 44 subnets of Facebook (socfb-*, mostly from different American universities), and 14 web graphs (web-*, showing the state of various subsets of the Internet at particular points in time). We take these graphs and run Algorithm~\ref{alg:ea} with seven mutation operators: $\plow$ and $\mymut$ with $\beta\in \{1.5, 2.5, 3.5\}$ and \stdmut{}.\footnote{In contrast to our earlier work~\cite{DBLP:conf/ppsn/00010Q018}, we are comparing against \stdmut{}, which performs at least one flip, thus making it a fairer comparison.} We use an intuitive bit-string representation based on vertices, and we initialize uniformly at random. Each edge has a weight of 1.

For each instance-mutation pair, we perform 100 independent runs (100\,000 evaluations each) and with an overall computation budget of 72 hours per pair. Out of the initial 123 instances 67 finish their 100 repetitions per instance within this time limit.\footnote{Source categories of the 67 instances: 2x bio-*, 6x ca-*, 5x ia-*, 2x inf-*, 1x soc-*, 40x socfb-*, 4x tech-*, 7x web-*. The largest graph is socfb-Texas84 with 36\,364 vertices and 1\,590\,651 edges.} We report on these 67 in the following. We use the average cut size achieved in the 100 runs as the basis for our analyses.

Firstly, we rank the seven approaches based on the average cut size achieved (best rank is 1, worst rank is 7). Table~\ref{tab:maxcutranks} shows the average rank achieved by the different mutation approaches. 
$\stdmut{}$ performs best at the lower budget and worst at the higher budget, which we take as a  strong indication that few bit-flips are initially helpful to quickly improve the cut size, while more flips are helpful later in the search to escape local optima. 
At the higher budget, both \fmut{\beta} and \pmut{\beta} perform better than \stdmut, independent of the parameter chosen. 
In particular, $\mymut$ clearly performs better than \fmut{\beta} at both budgets, however, while $\mymut$ with $\beta=1.5$ performs best at $10\,000$ iterations, $\mymut$ with $\beta=3.5$ performs best when the budget is $100\,000$ iterations.

\begin{table}[!tbh]%%%% Markus has tried to colour the cells but it did not work out due to layouting issues
\centering
\caption{Average ranks (based on mean cut size) of at $t=10\,000$ and $t=100\,000$ iterations (lower ranks are better).
%The colors correspond to the average rank of a scheme (colder colors are better). 
%The $\mymut$ approaches perform best at both budgets. 
%\stdmut{} or \fmut{1.5} have the worst average rank.
}\label{tab:maxcutranks}
\begin{tabular}{l|cc}
& \multicolumn{2}{c}{average rank} \\
mutation & $t=10,000$ & $t=100,000$ \\ \hline
\fmut{1.5}		&	4.1&	5.9 \\
\fmut{2.5}		&	5.7&	4.6	\\
\fmut{3.5}		&	6.6&	4.0	\\
\pmut{1.5}		&	2.4&	3.0	\\
\pmut{2.5}		&	3.0&	1.8	\\
\pmut{3.5}		&	4.0&	1.1	\\
\stdmut{}	&	2.1 &	6.7\\
\end{tabular}
\end{table}

To investigate the relative performance difference and the statistical significance thereof, we perform a Nemenyi two-tailed test (see Figure~\ref{fig:cd}). This test performs all-pairs comparisons on Friedman-type ranked data. The results are as expected and consistent with the average ranks reported in Table~\ref{tab:maxcutranks}. 

\begin{figure*}[!t]\centering%\vspace{-2mm}%
\begin{minipage}[b]{.49\linewidth}%
\centering\includegraphics[width=\linewidth,trim={70 0 70 0},clip]{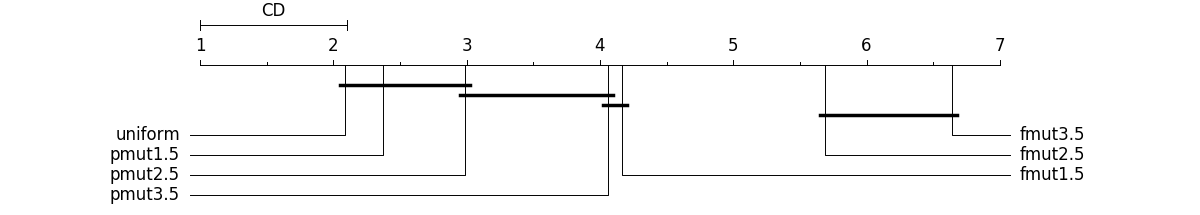}%
\subcaption{10 000 evaluations}\label{fig:cd10k}%
\end{minipage}\hspace*{4mm}
\begin{minipage}[b]{.49\linewidth}%
\centering\includegraphics[width=\linewidth,trim={70 0 70 0},clip]{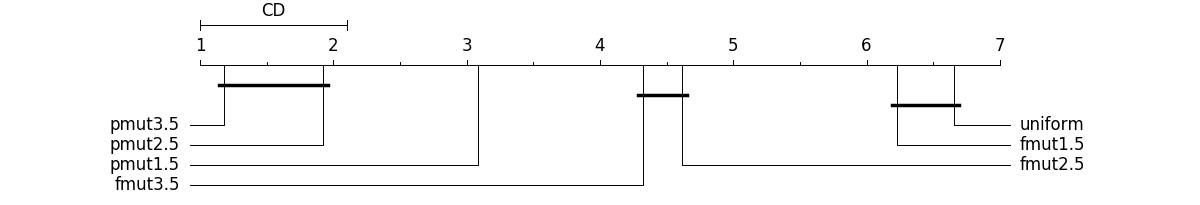}%
\subcaption{100 000 evaluations}\label{fig:cd100k}
\end{minipage}%
%\vspace{-1mm}
\caption{Critical Distance (CD) diagram based on a Nemenyi two-tailed test using the average rankings. CD (top left) shows the critical distance. Distances larger than CD corresponds to a statistical significant difference in the ranking. Relationships within a critical distance are marked with a horizontal bar.}%
\label{fig:cd}%\vspace{-2mm}
\end{figure*}

Across the 67 instances, the achieved cut sizes vary significantly (see %Fig.~\ref{fig:maxcutdetails} and 
Table~\ref{tab:maxcutpercentage}). For example, the average gap between the worst and the best approach is 42.1\% at 10\,000 iterations and it still is 7.4\% at 100\,000 iterations. Also, when we compare the best $\plow$ and $\mymut$ configurations (as per Table~\ref{tab:maxcutpercentage}), then we can see that (i) $\mymut$ is better or equal to $\plow$, and (ii) the performance advantage of $\mymut$ over $\plow$ is 2.3\% and 0.8\% on average, with a maximum of 4.7\% and 6.3\% (i.e., for $10\,000$ and $100\,000$ evaluations).

\begin{table*}[t]
\centering
\caption{Summary of cut-size differences. ``total'' refers to the gap between the best and worst performing mutation out of all seven. The two highlighted pairs compare the best $\plow$ and $\mymut$ values listed in Table~\ref{tab:maxcutranks}.}
\begin{tabular}{l|cc|cc}
& \multicolumn{2}{c}{$t=10k$} & \multicolumn{2}{|c}{$t=100k$} \\
 & total & \pmut{1.5} vs \fmut{1.5} & total & \pmut{3.5} vs \fmut{3.5} \\ \hline
min gap & 0.8\% & 1.1\% & 0.0\% & 0.0\% \\
mean gap & \hspace{3mm}13.0\%\hspace{3mm} & 2.3\% & \hspace{3mm}1.9\%\hspace{3mm} &	0.8\%\\
max gap & 42.1\% &	4.7\% & 7.4\% &	6.3\%\\
\end{tabular}
\label{tab:maxcutpercentage}
\end{table*}

To investigate the extent to which mutation performance and instance features are correlated, we perform a 2D projection using a principle component analysis of the instance feature space based on the features collected from~\cite{urlVC}. We then consider the performance of the seven mutation operators at a budget of 100,000 evaluations, and we visualize it in the 2D space (see Figure~\ref{fig:footprint}). In these projections, the very dense cluster in the top left is formed exclusively by the socfb-* instances, and the ridge from the very top left to the bottom left is made up of (from top to bottom) ia-*, tech-*, web*, and ca-* instances. The ``outlier'' on the right is web-BerkStan, due to its extremely high values of the average vertex degree, the number of triangles formed by three edges ($3$-cliques), the maximum triangles formed by an edge, and the maximum $i$-core number, where an $i$-core of a graph is a maximal induced subgraph and each vertex has degree at least $i$.

Interestingly, the performance seems to be correlated with the instance features and thus, indirectly, with their origin. For example, we can see in Figure~\ref{fig:foot:uniform} that \stdmut{} does not reach a cut size that is within 1\% of the best observed average for many of the socfb-* instances (shown as many black dots in the tight socfb*-cluster). In contrast to this, \pmut{3.5}'s corresponding Figure~\ref{fig:foot:pmut35} shows only red dots, indicating that it always performs within 1\% of the best-observed. 

Lastly, we summarize the results in Figure~\ref{fig:foot:difficulty} based on the concept of instance difficulty. Here, the color denotes the number of instances that achieve a cut size within 1\% of the best observed average. Interestingly, many ia-*, ca-*, web-* and tech-* instances are solved well by many mutation operators. In contrast to this, many socfb-* instances are blue, meaning that are solved well by just very few mutation operators -- in particular, by our \pmut{3.5}.

\begin{figure*}[!t]\centering%\vspace{-2mm}%
\begin{minipage}[b]{.33\linewidth}%
\centering\includegraphics[height=33mm,trim={0 0 0 0},clip]{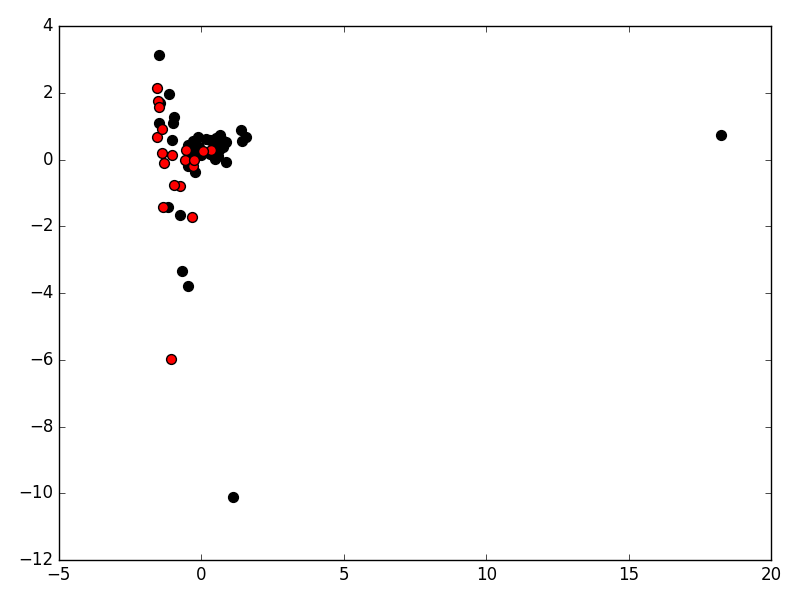}%
\subcaption{\stdmut{} footprint}\label{fig:foot:uniform}%
\end{minipage}\hspace*{0mm}
\begin{minipage}[b]{.33\linewidth}%
\centering\includegraphics[height=33mm,trim={0 0 0 0},clip]{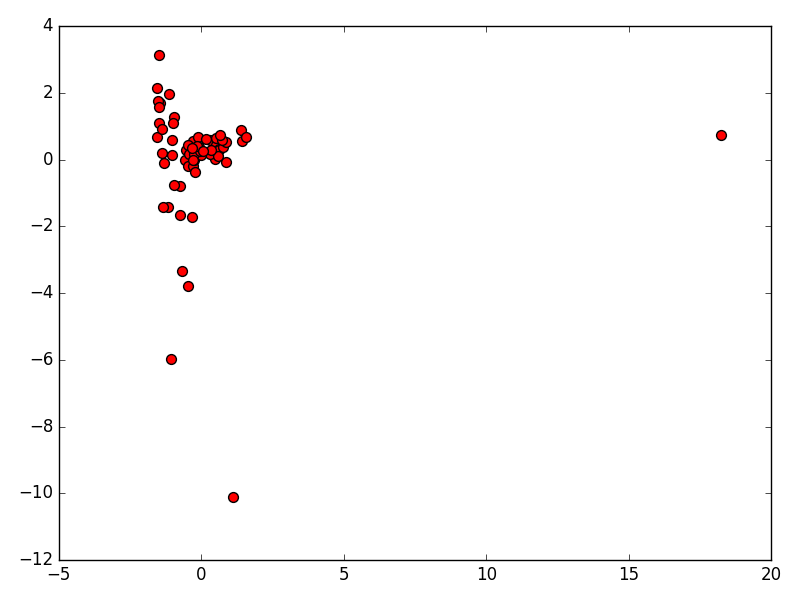}%
\subcaption{\pmut{3.5} footprint}\label{fig:foot:pmut35}%
\end{minipage}\hspace*{0mm}
\begin{minipage}[b]{.33\linewidth}%
\centering\includegraphics[height=33mm,trim={0 0 0 0},clip]{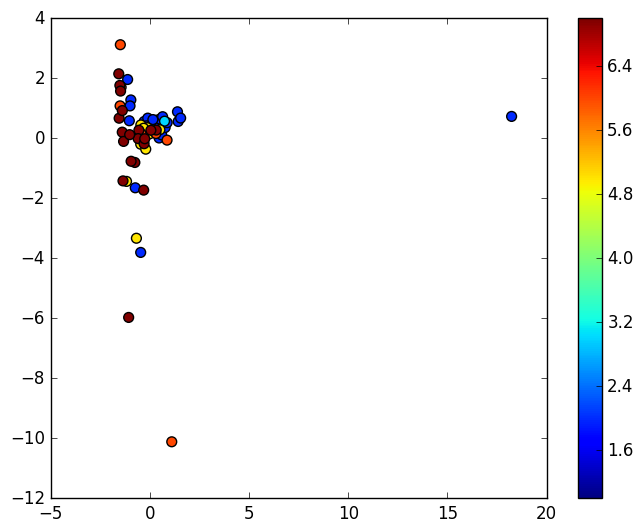}%
\subcaption{Instance difficulty}\label{fig:foot:difficulty}%
\end{minipage}%
%\vspace{-1mm}
\caption{Mutation operator footprints (left and middle plots): instances are marked red if the mutation are at most 1\% away from the best-observed performance. 
Instance difficulty (right-most plot): the color encodes the number of algorithms that perform within 1\% of the oracle performance. Note: a principle component analysis is used for the projection of the instances from the feature space into 2D.}%
\label{fig:footprint}%\vspace{-2mm}
\end{figure*}

\subsection{The Symmetric Mutual Information problem.}
\label{sec:symmetric_mutual_information}

We study an instance of the general \emph{feature selection} problem: Given a set of observations, find a subset of relevant features (variables, predictors) for use in model construction. 

We consider the following framework. Suppose that $n$ time series $X^{(1)}, \dots, X^{(n)}$ are given, each one representing a sequence of temporal observations. For each sequence $X^{(i)}$, define the corresponding temporal variation as a sequence $Y^{(i)}$ with
$
Y^{(i)}_j = X^{(i)}_j - X^{(i)}_{j - 1}.
$

We perform feature selection of the variables $Y^{(i)}$, assuming that the joint probability distribution $p(Y^{(1)},\dots, Y^{(n)})$ is Gaussian. Specifically, given a cardinality constraint $k$, we search for a subset $S\in [n]$ of size at most $k$ s.t. the corresponding series $\chi_S := \{Y^{(i)}\colon i \in S\}$ are optimal predictors for the overall variation in the model. Variations of this setting are found in many applications (see eg. Singh et al. \cite{DBLP:journals/jair/SinghKGK09}, Zhu and Stein \cite{Zhu2006}, and Zimmerman \cite{Zimmerman})

We use the \emph{mutual information} as an optimization criterion for identifying highly informative random variables among the $\{ Y^{(i)} \}$ (see Calseton and Zidek \cite{CASELTON1984223}). For a subset $S\in [n]$, we define the corresponding mutual information as
\begin{equation}
\label{eq:mutual_information}
\mutualinfo{S} = - \frac{1}{2} \sum_{i}(1 - \rho_i^2),
\end{equation}
where the $\rho_i$ are the canonical correlations between $\chi_S$ and $\chi_{V\setminus S}$. It is well-known that the mutual information as in \eqref{eq:mutual_information} is a symmetric non-negative submodular function (see Krause et al. \cite{DBLP:journals/jmlr/KrauseSG08}). Note also that a cardinality constraint $k$ is equivalent to a matroid constraint, with independent sets all subsets $S \in [n]$ of cardinality at most $k$. Hence, this approach to feature selection consists of maximizing a non-negative symmetric submodular function under a matroid constraint, as in Problem~\eqref{probl_2}. Following the framework outlined in Section~\ref{sec:submodular_const}, we approach this problem by maximizing the following fitness function
\begin{equation}
\label{fitness:Mutual_info}
z_{\mathsf{MI}}(S) =
\left \{
\begin{array}{ll}
\mutualinfo{S} & \mbox{if } \absl{S} \leq k;\\
k - \absl{S}   & \mbox{otherwise};
\end{array}
\right .
\end{equation}

We apply this methodology to perform feature selection on an air pollution dataset (see Rhode and Muller \cite{rhode_muller}).\footnote{This dataset is publicly available at www.berkleyearth.org.} This dataset consists of hourly air \notwo~data from over 1500 sites, during a four month interval from April 5, 2014 to August 5, 2014.

For a fixed cardinality constraint $k = 200, \dots, 850$, we let the \ea with various mutation rates run for a fixed time budget at $1$K, $2.5$K, and $5$K fitness evaluations. For each set of parameters, we perform $100$ runs and take the sample mean over all resulting fitness values. We consider the \ea with uniform mutation, $\clmut$ and $\plow$ with $\beta = 1.5, 2.5, 3.5$. The results are displayed in Figure \ref{fig:result_MI}.

We observe that for a small time budget and small $k$, heavy tailed-mutations outperform the standard uniform mutation and the $\plow$. We observe that for large $k$ all mutation operators achieve similar performance. These results suggest that for small time budget, and small $k$, larger jumps are beneficial, whereas standard mutation operators may be sufficient to achieve a good approximation of the optimum, given more resources.

\begin{figure*}[!t]\centering%\vspace{-2mm}%
\centering\includegraphics[width=\linewidth]{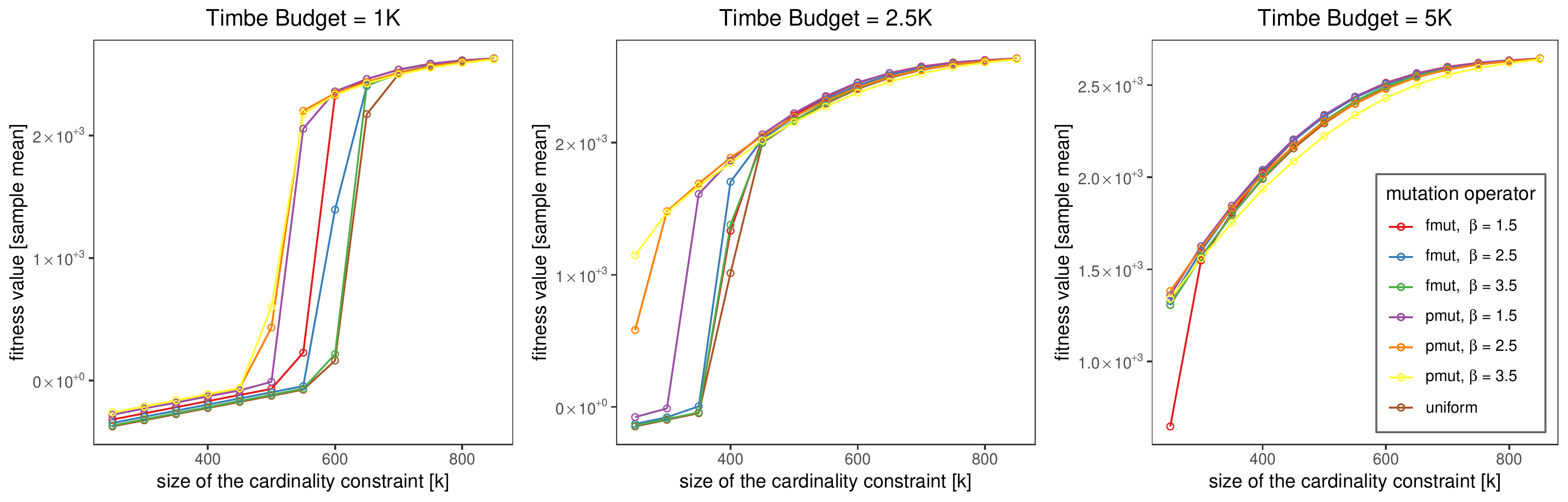}%
\caption{Solution quality achieved by the \ea with various mutation rates on a fitness function as in \eqref{fitness:Mutual_info}, for fixed cardinality constraint $k$, and varying time budget. We consider the \ea with uniform mutation, $\clmut$ and $\plow$ with $\beta = 1.5, 2.5, 3.5$. Each dot corresponds to the sample mean of $100$ independent runs.
}%
\label{fig:result_MI}%\vspace{-2mm}
\end{figure*}

\section{Conclusions}
In the pursuit of optimizers for complex landscapes that arise in industrial problems, we have identified a new mutation operator. This operator allows for good performance of the classical \ea when optimizing not only simple artificial test functions, but the whole class of non-negative submodular functions and symmetric submodular functions under a matroid constraint. As submodular functions find applications in a variety of natural settings, it is interesting to consider the potential utility of heavy tailed operators as  building blocks for optimizers of more complex landscapes, where submodularity can be identified in parts of these landscapes.

\section*{Acknowledgment}

Markus Wagner has been supported by ARC Discovery Early Career Researcher Award DE160100850. 

% Can use something like this to put references on a page
% by themselves when using endfloat and the captionsoff option.

% trigger a \newpage just before the given reference
% number - used to balance the columns on the last page
% adjust value as needed - may need to be readjusted if
% the document is modified later
%\IEEEtriggeratref{8}
% The "triggered" command can be changed if desired:
%\IEEEtriggercmd{\enlargethispage{-5in}}

% references section

% can use a bibliography generated by BibTeX as a .bbl file
% BibTeX documentation can be easily obtained at:
% http://mirror.ctan.org/biblio/bibtex/contrib/doc/
% The IEEEtran BibTeX style support page is at:
% http://www.michaelshell.org/tex/ieeetran/bibtex/
%\bibliographystyle{plain}
% argument is your BibTeX string definitions and bibliography database(s)
\bibliography{bibliography}

% biography section
% 

% insert where needed to balance the two columns on the last page with
% biographies
%\newpage

%\begin{IEEEbiographynophoto}{Jane Doe}
%Biography text here.
%\end{IEEEbiographynophoto}

\end{document}